\let\accentvec\vec  
\let\vec\accentvec 
\setlist{noitemsep}
\newcommand{\yes}{\textsc{yes}\xspace}
\newcommand{\no}{\textsc{no}\xspace}
\newtheorem{numberedclaim}{Claim}
\newtheorem{observation}{Observation}
\newcommand{\claimqed}{\renewcommand{\squareforqed}{$\lrcorner$}\qed\renewcommand{\squareforqed}{\plainsquareforqed}}
\newcommand{\Oh}{{\mathcal{O}}}
\newcommand{\C}{{\mathcal{C}}}
\let\SSign\S
\renewcommand{\S}{\ensuremath{\mathcal{S}}\xspace}
\renewcommand{\P}{\ensuremath{\mathcal{P}}\xspace}
\newcommand{\opt}{\ensuremath{\mathrm{\textsc{opt}}}\xspace}
\newcommand{\containment}[0]{{\sf NP}~$\subseteq$~{\sf coNP$/$poly}\xspace}
\newcommand{\h}[1]{\end{document}}
\renewcommand{\C}{\ensuremath{\mathcal{C}}\xspace}
\newcommand{\F}{\ensuremath{\mathcal{F}}\xspace}
\newcommand{\defparproblem}[4]{
 \vspace{1mm}
\noindent\fbox{
 \begin{minipage}{0.96\textwidth}
 \begin{tabular*}{\textwidth}{@{\extracolsep{\fill}}lr} #1 & {\bf{Parameter:}} #3 \\ \end{tabular*}
 {\bf{Input:}} #2 \\
 {\bf{Question:}} #4
 \end{minipage}
 }
 \vspace{1mm}
}
\newcommand{\defproblem}[3]{
 \vspace{1mm}
\noindent\fbox{
 \begin{minipage}{0.96\textwidth}
 \begin{tabular*}{\textwidth}{@{\extracolsep{\fill}}lr} #1 &  \\ \end{tabular*}
 {\bf{Input:}} #2 \\
 {\bf{Question:}} #3
 \end{minipage}
 }
 \vspace{1mm}
}
\newcommand{\HitPathsInFlower}{\textsc{Hitting Paths in a Flower with Budgets}\xspace}
\newcommand{\HitPathsInGraph}{\textsc{Hitting Paths in a Graph}\xspace}
\newcommand{\SignedThreeSatn}{\textsc{$n$-Totally Ordered Regular Signed 3-SAT}\xspace}
\newcommand{\nTORSThreeSat}{\textsc{$n$-TORS 3-SAT}\xspace}
\newcommand{\kClique}{\textsc{$k$-Clique}\xspace}
\newcommand{\TORSTwoSat}{\textsc{TORS 2-SAT}\xspace}
\newcommand{\VertexCover}{\textsc{Vertex Cover}\xspace}
\newcommand{\HittingSet}{\textsc{Hitting Set}\xspace}
\newcommand{\SetCover}{\textsc{Set Cover}\xspace}
\title{On Structural Parameterizations of Hitting Set: \\ Hitting Paths in Graphs Using 2-SAT\thanks{Supported by NWO Veni grant ``Frontiers in Parameterized Preprocessing'' and NWO Gravity grant ``Networks''.}}
\author{Bart M.\ P.\ Jansen \inst{1}}
\institute{Eindhoven University of Technology, The Netherlands. \email{b.m.p.jansen@tue.nl}}
\begin{document}

\hypersetup{bookmarksdepth=-1}

\maketitle

\hypersetup{bookmarksdepth=2} 

\begin{abstract}
\HittingSet is a classic problem in combinatorial optimization. Its input consists of a set system~$\F$ over a finite universe~$U$ and an integer~$t$; the question is whether there is a set of~$t$ elements that intersects every set in~$\F$. The \HittingSet problem parameterized by the size of the solution is a well-known W[2]-complete problem in parameterized complexity theory. In this paper we investigate the complexity of \HittingSet under various structural parameterizations of the input. Our starting point is the folklore result that \HittingSet is polynomial-time solvable if there is a tree~$T$ on vertex set~$U$ such that the sets in~$\F$ induce connected subtrees of~$T$. We consider the case that there is a treelike graph with vertex set~$U$ such that the sets in~$\F$ induce connected subgraphs; the parameter of the problem is a measure of how treelike the graph is. Our main positive result is an algorithm that, given a graph~$G$ with cyclomatic number~$k$, a collection~$\P$ of simple paths in~$G$, and an integer~$t$, determines in time~$2^{5k} (|G| +|\P|)^{\Oh(1)}$ whether there is a vertex set of size~$t$ that hits all paths in~$\P$. It is based on a connection to the 2-SAT problem in multiple valued logic. For other parameterizations we derive W[1]-hardness and para-NP-completeness results.
\end{abstract}

\section{Introduction}

\HittingSet is a classic problem in combinatorial optimization that asks, given a set system~$\F$ over a finite universe~$U$, and an integer~$t$, whether there is a set of~$t$ elements that intersects every set in~$\F$. It was one of the first problems to be identified as NP-complete~\cite{Karp72}. Parameterized complexity theory is a refined view of computational complexity that aims to attack NP-hard problems by algorithms whose running time is exponential in a problem-specific \emph{parameter value}, but polynomial in terms of the overall input size. The standard parameterization of \HittingSet by the size of the desired solution is unlikely to admit such a fixed-parameter tractable algorithm, as it is W[2]-complete~\cite{DowneyF13}. The goal of this paper is to consider other parameterizations of \HittingSet, with the aim of obtaining FPT algorithms. Our starting point is the folklore result that \HittingSet is polynomial-time solvable when there is a tree~$T$ on vertex set~$U$ such that all sets~$S \in \F$ induce connected subtrees of~$T$. The \HittingSet problem on such an instance can be solved by a greedy strategy (Section~\ref{section:prelims}). Motivated by this result, we consider whether \HittingSet can be solved efficiently if there is a graph~$G$ that is close to being a tree, such that all~$S \in \F$ induce connected subgraphs of~$G$. We therefore parameterize the problem by measures of closeness of~$G$ to a tree, which forms an example of parameterizing by distance from triviality~\cite{Niedermeier10}.

\textbf{Our results.} 
One way to measure how close a connected graph is to a tree is to consider its \emph{cyclomatic number}~$k := m - (n - 1)$. This is the size of a minimum feedback edge set of the graph, i.e., of a minimum set of edges whose removal breaks all cycles in the graph. As a tree has cyclomatic number zero, it is natural to ask if \HittingSet can be solved efficiently if the set system~$\F$ can be represented by a graph~$G$ on vertex set~$U$ having small cyclomatic number, such that every set~$S \in \F$ induces a connected subgraph of~$G$. To decouple the difficulty of finding a representation of~$\F$ in this form from the problem of exploiting this representation to solve \HittingSet, we consider the situation when such a representation is given. In this setting, the problem can be phrased more naturally in graph-theoretical terms: given a graph~$G$ of cyclomatic number~$k$, a collection~$\S$ of connected subgraphs of~$G$, and an integer~$t$, is there a vertex set of size~$t$ that hits all subgraphs in~$\S$? 

\begin{table}[t]
	\caption{Parameterized complexity overview for hitting subgraphs by the minimum number of vertices, parameterized by measures of structure of the host graph. 
	}
	\centering
{
\begin{tabular}{@{}llllll@{}}
\toprule
parameter & \multicolumn{5}{c}{complexity for type of subgraphs to be hit} \\ 
\cmidrule{2-6}
& \multicolumn{2}{l}{path} & \phantom{abc} & \multicolumn{2}{l}{3-leaf subtree} \\ 
\midrule
cyclomatic number & FPT, no~$k^{\Oh(1)}$ kernel & thm.~\ref{theorem:pathsingraph:fpt} & & W[1]-hard & thm.~\ref{theorem:hitclaws:whard} \\
feedback vertex number & para-NP-complete & thm.~\ref{theorem:hitpaths:fvs:npc} & & para-NP-complete & thm.~\ref{theorem:hitpaths:fvs:npc} \\
\bottomrule
\end{tabular}
\label{table:summary}
}
\end{table}

Our first result for the parameterization by cyclomatic number is a hardness proof showing this problem to be W[1]-hard. In fact, we prove W[1]-hardness even when all subgraphs in~$\S$ are trees with at most three leaves. To establish this hardness result we prove that a variation of 3-SAT in multiple valued logic (see Section~\ref{section:prelims}) is W[1]-hard, which may be of independent interest. Concretely, we show the following. Given a set of~$n$ variables~$x_1, \ldots, x_n$ that can take values from~$1$ to~$N$, and a formula that is a conjunction of clauses of size at most three, where each literal is of the form~$x_i \geq c$ or~$x_i \leq c$ for~$c \in [N]$, it is W[1]-hard parameterized by~$n$ to determine whether there is an assignment to the variables satisfying all clauses. This parameterized logic problem reduces to the discussed structural parameterization of \HittingSet in a natural way.

The hardness result motivates us to place further restrictions on the problem in search of fixed-parameter tractable cases. We consider the situation of hitting a set~$\P$ of \emph{simple paths} in a graph~$G$ of cyclomatic number~$k$. This corresponds to \HittingSet instances where there is a graph~$G$ on~$U$ such that for all sets~$S$ in~$\F$, there is a \emph{simple path} in~$G$ on vertex set~$S$. We prove that this problem is fixed-parameter tractable and can be solved in time~$2^{5k} (|G| + |\P|)^{\Oh(1)}$, which is the main algorithmic result in this paper. The algorithm is based on a reduction to~$2^{5k}$ instances of the 2-SAT problem in multiple valued logic, which is known to be polynomial-time solvable~\cite{BejarHM01,Manya00}. The reduction exploits the fact that in tree-like parts of the graph, the local structure of minimum hitting sets can be determined by greedily computed optimal hitting sets for subtrees of a tree. After branching in~$2^{5k}$ directions to determine the form of a solution, the interaction between such canonical subsolutions is then encoded in a 2-SAT formula in multiple valued logic, which can be evaluated efficiently.

There are several other parameters that measure the closeness of a graph to a tree, such as the \emph{feedback vertex number} and \emph{treewidth} (cf.~\cite{FellowsJR13}). As these parameters have smaller values than the cyclomatic number, one might hope to extend the FPT result mentioned above to these parameters. However, we show that this is impossible, unless P=NP. In particular, we prove that the problem of hitting simple paths in a graph of feedback vertex number~$2$ is NP-complete, showing the parameterizations by feedback vertex number and treewidth to be para-NP-complete. Table~\ref{table:summary} gives an overview of the results in this paper.

\textbf{Related work.} Several authors~\cite{CoppersmithV85,Fiala01,UhlmannW13} have considered problems parameterized by cyclomatic number; this is also known as parameterizing by feedback edge set. 
In parameterized complexity, \HittingSet is often studied when the sets to be hit have constant size. In this setting, several FPT algorithms and kernelizations bounds are known \cite{Abu-Khzam10,DellM14,Wahlstrom07}. The weighted \SetCover problem, which is dual to \HittingSet, has been analyzed for tree-like set systems by Guo and Niedermeier~\cite{GuoN06}. Recently, Lu et al.~\cite{LuLTLX14} considered \SetCover and \HittingSet for set systems representable as subtrees of a (restricted type of) tree, distinguishing polynomial-time and NP-complete cases.

\textbf{Organization.} Preliminaries are given in~\ref{section:prelims}. The FPT algorithm for hitting paths is developed in Section~\ref{section:fpt}. Section~\ref{section:hardness} contains the hardness proofs.

\section{Preliminaries} \label{section:prelims}

\textbf{Parameterized complexity.}
A parameterized problem is a set~$Q \subseteq \Sigma^* \times \mathbb{N}$, where~$\Sigma$ is a fixed finite alphabet. The second component of a tuple~$(x,k) \in \Sigma^* \times \mathbb{N}$ is the \emph{parameter}. A parameterized problem is (strongly uniformly) \emph{fixed-parameter tractable} if there is an algorithm that decides every input~$(x,k)$ in time~$f(k)|x|^{\Oh(1)}$. Evidence that a problem is not fixed-parameter tractable is given by proving that it is W[1]-hard. We refer to one of the textbooks~\cite{DowneyF13,FlumG06} for more background.

\textbf{Graphs.}
All graphs we consider are simple, undirected and finite. A graph~$G$ consists of a set of vertices~$V(G)$ and edges~$E(G)$. Notation not defined here is standard. For a set of vertices~$S$ we denote by~$N_G(S)$ the set~$\bigcup _{v \in S} N_G(v) \setminus S$. A path in a graph~$G$ is a sequence of distinct vertices such that successive vertices are connected by an edge. The first and last vertices on the path are its endpoints, the remaining vertices are its interior vertices. Given a graph~$G$ and a vertex subset~$S \subseteq V(G)$, the operation of \emph{identifying} the vertices of~$S$ into a new vertex~$z$ is performed as follows: delete the vertices in~$S$ and their incident edges, and insert a new vertex~$z$ that is adjacent to~$N_G(S)$, i.e., to all remaining vertices of~$G$ that were adjacent to at least one member of~$S$.

\begin{proposition} \label{proposition:degtwo}
Let~$G$ be a connected graph of minimum degree at least two with cyclomatic number~$k$. The number of vertices in~$G$ with degree at least three is bounded by~$2k-2$.
\end{proposition}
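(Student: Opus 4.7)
The plan is a short degree-counting argument based on the handshake lemma. Since $G$ is connected, its cyclomatic number satisfies $k = |E(G)| - |V(G)| + 1$, so $|E(G)| = |V(G)| + k - 1$. Let $n_2$ denote the number of vertices of degree exactly $2$ and let $n_{\geq 3}$ denote the number of vertices of degree at least $3$. Since the minimum degree is at least two, every vertex falls into one of these two categories, giving $|V(G)| = n_2 + n_{\geq 3}$.

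The main step is to lower-bound $2|E(G)| = \sum_{v \in V(G)} \deg_G(v)$ by $2 n_2 + 3 n_{\geq 3} = 2|V(G)| + n_{\geq 3}$. Combining this with $|E(G)| = |V(G)| + k - 1$ yields $n_{\geq 3} \leq 2|E(G)| - 2|V(G)| = 2(k-1) = 2k-2$, which is precisely the claimed bound. No case distinctions or structural observations about cycles are needed; the connectivity assumption is used only to express the cyclomatic number as $m - n + 1$.

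There is essentially no obstacle here: every inequality is tight in the sense that a vertex of degree exactly $d$ contributes $d$ to the degree sum, and the substitution $n_2 = |V(G)| - n_{\geq 3}$ is exact. If one wanted to be extra careful one could note that the minimum-degree condition is what allows replacing $n_2$ by $|V(G)| - n_{\geq 3}$ in the lower bound $2 n_2 + 3 n_{\geq 3}$; without it, vertices of degree $0$ or $1$ would weaken the inequality and the bound would fail.
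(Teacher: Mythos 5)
Your argument is correct and is essentially the same handshake-lemma degree count the paper uses: both express $m = n + k - 1$ via the cyclomatic number, bound $2m \geq 2n_2 + 3n_{\geq 3}$ using the minimum-degree hypothesis, and solve for $n_{\geq 3} \leq 2k-2$. Nothing further is needed.
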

\begin{proof}
Denote by~$n_2$ and~$n_{\geq 3}$ the number of vertices in~$G$ with degree two and at least three, respectively. Let~$n$ and~$m$ be the total number of vertices and edges in~$G$, and let~$d(v)$ denote the degree of a vertex~$v$. Since~$k = m - (n-1)$ we have~$m = k + (n_2 + n_{\geq 3} - 1)$. The value of~$m$ can also be obtained as half the degree sum of~$G$:
$$m = \frac{1}{2} \sum_{v \in V(G)} d(v) \geq n_2 + \frac{3n_{\geq 3}}{2}.$$
Hence we find:
$$m = k + (n_2 + n_{\geq 3} - 1) \geq n_2 + \frac{3n_{\geq 3}}{2},$$ from which we obtain~$n_{\geq 3} \leq 2k-2$ by subtracting~$n_2 + n_{\geq 3}$ on both sides and multiplying by two.
\qed
\end{proof}

\begin{proposition} \label{proposition:numcomponents}
Let~$G$ be a connected graph of minimum degree at least two with cyclomatic number~$k$ and let~$S$ be the set of vertices of degree at least three. If~$S \neq \emptyset$ then the number of connected components of~$G - S$ is at most~$k + |S| - 1$.
\end{proposition}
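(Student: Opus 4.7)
The plan is to exploit the fact that $V(G) \setminus S$ consists entirely of degree-$2$ vertices (by the definition of $S$ and the minimum-degree hypothesis), so every connected component of $G - S$ is either a path or a cycle. The first step will be to rule out cycle components. A cycle component of $G - S$ would consist of vertices having degree $2$ both in $G - S$ and in $G$; hence none of its vertices is adjacent to $S$, so such a cycle would be a connected component of $G$ itself. Since $G$ is connected and $S \neq \emptyset$, this is impossible. Therefore every component of $G - S$ is a path (possibly trivial).

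Next I would perform an edge count across the cut $(S, V(G) \setminus S)$. Consider a path component on $\ell$ vertices. If $\ell = 1$, the unique vertex has degree $2$ in $G$ and no neighbor inside its component, so it contributes exactly $2$ edges to $S$. If $\ell \geq 2$, each endpoint of the path has degree $2$ in $G$ and exactly one neighbor within the component, so it contributes exactly $1$ edge to $S$; the interior vertices already have two neighbors in the component and thus no edge to $S$. Either way, each of the $c$ path components contributes exactly $2$ edges, so the cut has size exactly $2c$.

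With this in hand, let $m_S$ (resp.\ $m_{\bar S}$) denote the number of edges of $G$ with both endpoints in $S$ (resp.\ in $V(G)\setminus S$). The handshake lemma restricted to degree-$2$ vertices gives $2 n_2 = 2 m_{\bar S} + 2c$, so $m_{\bar S} = n_2 - c$, and hence
\[
m = m_S + m_{\bar S} + 2c = m_S + n_2 + c.
\]
Substituting into the definition $k = m - (|S| + n_2) + 1$ yields $c = k + |S| - 1 - m_S$, which immediately gives $c \leq k + |S| - 1$ since $m_S \geq 0$.

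The only place where any care is needed is the first step, namely excluding cycle components; everything else reduces to bookkeeping with the handshake lemma and the definition of the cyclomatic number. As sanity checks, equality is attained for the theta graph on two degree-$3$ vertices joined by three internally disjoint paths, and for two triangles sharing a single vertex.
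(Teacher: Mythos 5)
Your proof is correct. It rests on the same key observation as the paper's --- every component of $G-S$ is a path sending exactly two edges into $S$ --- but derives the bound by a different mechanism. The paper replaces each component (together with its two cut edges) by a single edge of a multigraph $H$ on vertex set $S$, notes that $H$ is a connected topological minor of $G$ and hence has cyclomatic number $k' \leq k$, and reads off the bound from $|E(H)| = |S| - 1 + k'$; components correspond to edges of $H$. You instead do the degree-sum bookkeeping directly: the cut $(S, V(G)\setminus S)$ has exactly $2c$ edges, so $m = m_S + n_2 + c$, and substituting into $k = m - n + 1$ gives the exact identity $c = k + |S| - 1 - m_S$. This avoids invoking the monotonicity of the cyclomatic number under topological minors (which the paper asserts without proof), identifies the precise deficit $m_S$, and makes the equality cases transparent. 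You are also more explicit than the paper in ruling out cycle components of $G-S$ --- the paper simply asserts that all components are paths, whereas a component of maximum degree two could a priori be a cycle, excluded only by the connectivity of $G$ together with $S \neq \emptyset$ exactly as you argue.
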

\begin{proof}
As~$S$ contains all vertices of degree at least three, every connected component~$C$ of~$G - S$ is a path. Since~$G$ has minimum degree at least two, every endpoint of such a path has a neighbor in~$S$. Hence for every connected component~$C$ of~$G - S$ there are exactly two edges between~$C$ and~$S$. Consider the multigraph~$H$ on vertex set~$S$ defined as follows. For every component~$C$ of~$G - S$, consider the two edges between~$C$ and~$S$  and let~$x,y$ be their endpoints in~$S$. We add an edge between~$x$ and~$y$ to~$H$; if~$x=y$ this becomes a self-loop, and there is the chance of creating parallel edges. Since~$H$ is a connected topological minor of~$G$ it is easy to see that the cyclomatic number~$k'$ of~$H$ does not exceed that of~$G$. Since~$|E(H)| = (|V(H)| - 1) + k'$, we find that~$|E(H)| = |S| - 1 + k' \leq k + |S| - 1$. Since connected components of~$G - S$ are in 1-to-1 correspondence with edges of~$H$, this completes the proof.
\qed
\end{proof}

\textbf{Hitting set.} 
A set system~$\F \subseteq 2^U$ can be viewed as a hypergraph whose vertices are~$U$ and whose hyperedges are formed by the sets in~$\F$. A set system~$\F$ is a \emph{hypertree} if there is a tree~$T$ on vertex set~$U$ such that every set in~$\F$ induces a subtree of~$T$. Testing whether a set system is a hypertree, and constructing a tree representation if this is the case, can be done in polynomial time~\cite{Trick87}. 

We frequently use the fact that a minimum hitting set for a hypertree can be found in polynomial time (cf.~\cite[\SSign 2]{GuoN06} for a view from a dual perspective). When a tree representation is known, a greedy algorithm can be used to find a minimum hitting set. If we root the tree at a leaf and find a vertex~$v$ of maximum depth for which there is a set~$S \in \F$ whose members all belong to the subtree rooted at~$v$, then it is easy to show there is a minimum hitting set containing~$v$. Consequently, we may add~$v$ to the solution under construction, remove all sets hit by~$v$, and remove all elements in the subtree rooted at~$v$ from the universe.

This idea can be extended for the following setting. Suppose we have a graph~$G$ that is isomorphic to a simple cycle and a set~$\P$ of paths in~$G$. To find a minimum vertex set that hits all the paths in~$\P$, we try for each vertex~$v$ of~$G$ whether there is a minimum solution containing it. After removing~$v$ and the paths hit by~$v$, the remaining structure is a hypertree since the cycle breaks open when removing~$v$. The minimum over all choices of~$v$ gives an optimal hitting set. We will use this in our FPT algorithm to deal with a corner case.

\textbf{Multiple valued logic.} 
The hitting set problems we are interested in turn out to be related to variations of the \textsc{Satisfiability} problem that have been studied in the field of multiple valued logic. In a multiple valued logic, variables can take on more values than just~$0$ and~$1$: there is a \emph{truth value set} containing the possible values. For our application, the truth value set is totally ordered; it is a range of integers~$[N] = \{1, \ldots, N\}$. A \emph{regular sign} is a constraint of the form~$\geq j$ or~$\leq j$ for~$j \in [N]$. By constraining variables with regular signs, resulting in (generalized) literals of the form~$x_i \geq j$ or~$x_i \leq j$, and combining such literals with the usual logical connectives, one creates totally ordered regular signed formulas. As expected, the satisfiability problem for such formulas is to determine whether every variable can be assigned a value in the range~$[N]$ such that the formula is satisfied. We shall be interested in the case of CNF formulas with clauses having at most two (2-SAT) or at most three (3-SAT) literals. 

\defparproblem{\SignedThreeSatn}
{A totally ordered regular signed 3-CNF formula with~$n$ variables and truth value set~$[N]$.}
{$n$.}
{Is the formula satisfiable?}

For brevity we sometimes refer to this problem as \nTORSThreeSat. We also consider \TORSTwoSat, where clauses have at most two literals, which is polynomial-time solvable~\cite{Manya00}. In particular, \TORSTwoSat can be reduced to the 2-SAT problem in classical logic~\cite[\SSign 3]{BejarHM01}, which is well-known to be solvable in linear time~\cite{AspvallPT79}. For completeness, we sketch the reduction in Appendix~\ref{app:twosat}.

\section{Algorithms} \label{section:fpt}

The goal of this section is to develop an FPT algorithm for the following parameterized problem.

\defparproblem{\HitPathsInGraph}
{An undirected simple graph~$G$ with cyclomatic number~$k$, an integer~$t$, and a set~$\P$ of simple paths in~$G$.}
{$k$.}
{Is there a set~$X \subseteq V(G)$ of size at most~$t$ that hits all paths in~$\P$?}

The algorithm consists of two reductions. An instance of \HitPathsInGraph is reduced to a hitting set problem on a more structured graph, called a flower. An instance with such a flower structure can be reduced to a polynomial-time solvable 2-SAT problem in multiple valued logic. This section is structured as follows. We first describe the flower structure and the reduction to 2-SAT in Section~\ref{subsection:hit:flowers}. Afterward we show how to build an FPT algorithm from this ingredient, in Section~\ref{subsection:hit:paths}.

\subsection{Hitting Paths in Flowers} \label{subsection:hit:flowers}

The key notion in this section is that of a \emph{flower graph}, which is a graph~$G$ with a distinguished vertex~$z$ called the \emph{core} such that all connected components of~$G-\{z\}$ are paths~$R_1, \ldots, R_n$ of which no interior vertex is adjacent to~$z$. These paths are called \emph{petals} of the flower. When working with flower graphs we will assume an arbitrary but fixed ordering of the petals as~$R_1, \ldots, R_n$, as well as an orientation of each petal~$R_i$ as consisting of vertices~$r_{i,1}, \ldots, r_{i,|V(R_i)|}$. For ease of discussion we will interpret each petal to be laid out from left to right in order of increasing indices. We will give an FPT branching algorithm that reduces \HitPathsInGraph to solving several instances of the following more restricted problem.

\defproblem{\HitPathsInFlower}
{A flower graph~$G$ with core~$z$ and petals~$R_1, \ldots, R_n$, a set of simple paths~$\P = \{P_1, \ldots, P_m\}$ in~$G$, and a budget function~$b \colon [n] \to \mathbb{N}_{\geq 1}$.}
{Is there a set~$X \subseteq V(G) \setminus \{z\}$ that hits all paths in~$\P$ such that~$|X \cap V(R_i)| = b(i)$ for all~$i \in [n]$?}

We show that \HitPathsInFlower can be solved in polynomial time. The following notion will be instrumental to analyze the structure of solutions to this problem.

\begin{definition} \label{definition:canonical}
Let~$R_i$ be a petal of an instance~$(G,z,\P,b)$ of \HitPathsInFlower and let~$1 \leq \ell \leq |V(R_i)|$. The \emph{canonical $\ell$-th solution} for petal~$R_i$ is defined by the following process.
\begin{enumerate}
	\item If there is a path in~$\P$ that is contained entirely within~$\{r_{i,1}, \ldots, r_{i,\ell-1}\}$, then define the canonical $\ell$-th solution to be NIL.\label{step:earlyout}
	\item Otherwise, initialize~$X_{i,\ell}$ as the singleton set containing~$r_{i,\ell}$.
	\begin{enumerate}
		\item While there is a path in~$\P$ that is contained entirely within~$R_i$ and is not intersected by~$X_{i,\ell}$, consider a path among this set that minimizes the index~$j'$ of its right endpoint and add~$r_{i,j'}$ to~$X_{i,\ell}$.\label{step:hitpath}
		\item While~$|X_{i,\ell}| < b(i)$ and~$Y := \{r_{i,\ell}, \ldots, r_{i,|V(R_i)|}\} \setminus X_{i,\ell} \neq \emptyset$, add the highest-indexed vertex from~$Y$ to~$X_{i,\ell}$. (Recall that~$b(i)$ is the budget for petal~$R_i$.)\label{step:fillsize}
		\item If~$|X_{i,\ell}| = b(i)$, the canonical $\ell$-th solution is~$X_{i,\ell}$. If~$|X_{i,\ell}| \neq b(i)$, define the canonical $\ell$-th solution to be NIL.\label{step:toolarge}
		\end{enumerate}
\end{enumerate}
A set~$X_i \subseteq V(R_i)$ is a canonical solution for petal~$R_i$ if there is an integer~$\ell$ for which~$X_i$ is the canonical $\ell$-th solution for~$R_i$. A canonical solution is well defined if it is not NIL. A solution~$X$ to the instance~$(G,z,\P,b)$ is \emph{globally canonical} if~$X \cap R_i$ is a well-defined canonical solution for all~$i$.
\end{definition}

Figure~\ref{fig:petals} illustrates these concepts. For a set~$X_i \subseteq V(R_i)$ we will denote by~$\max(X_i)$ the highest index of any vertex in~$X_i$, i.e., the index of the rightmost vertex of~$X_i$. Similarly, we denote by~$\min(X_i)$ the index of the leftmost vertex of~$X_i$. The following observations about the procedure will be useful.

\begin{figure}[t]
\begin{center}
\subfigure[Flower.]{\label{fig:petal0}
\includegraphics[scale=1]{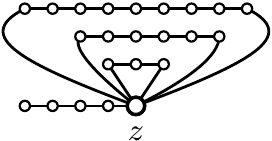}
}
\subfigure[$9$-vertex petal.]{\label{fig:petal1}
\includegraphics[scale=1]{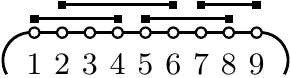}
}
\subfigure[Canonical solution.]{\label{fig:petal2}
\includegraphics[scale=1]{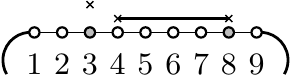}
}
\caption{(\ref{fig:petal0}) A flower graph with~$4$ petals and core~$z$. (\ref{fig:petal1}) A $9$-vertex petal whose endpoints are adjacent to~$z$. The target paths within the petal that must be hit by a solution are drawn stacked on top of each other. (\ref{fig:petal2}) The set~$\{3,8\}$ is the $3$-rd canonical solution of size~$2$ for the petal, with respect to the target paths drawn in \ref{fig:petal1}. The corresponding partition of~$\{3,\ldots,8\}$ into two subpaths described in Observation~\ref{observation:partition} is shown above the petal. It includes the singleton path~$\{3\}$. The canonical $1$-st solution of size~$2$ is NIL, since the procedure of Definition~\ref{definition:canonical} produces the set~$X_{i,\ell} = \{1,6,9\}$, which is too large and is rejected in Step~\ref{step:toolarge}.}
\end{center}\label{fig:petals}
\end{figure}

\begin{observation} \label{observation:leftmost}
If~$X_{i,\ell}$ is a well-defined canonical solution, then~$\min(X_{i,\ell}) = \ell$.
\end{observation}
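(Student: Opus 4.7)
The plan is to trace through the construction in Definition~\ref{definition:canonical} and verify that whenever a vertex is inserted into~$X_{i,\ell}$, its petal-index is at least~$\ell$, while~$r_{i,\ell}$ itself is inserted at the very first initialization step. Since~$r_{i,\ell} \in X_{i,\ell}$ immediately gives $\min(X_{i,\ell}) \leq \ell$, the content of the observation lies in the matching lower bound $\min(X_{i,\ell}) \geq \ell$. There are only two rules in the procedure that can enlarge $X_{i,\ell}$ after its initialization, namely Step~\ref{step:hitpath} and Step~\ref{step:fillsize}, and I would handle them separately. Step~\ref{step:fillsize} is immediate: it inserts vertices drawn from $Y \subseteq \{r_{i,\ell}, \ldots, r_{i,|V(R_i)|}\}$, and by construction every element of $Y$ has petal-index at least~$\ell$.

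The interesting step is Step~\ref{step:hitpath}. The argument I would give is the following. Each vertex added there is the right endpoint of some path $P \in \P$ that is contained entirely in the petal~$R_i$ and is not hit by the current $X_{i,\ell}$. Because~$R_i$ is itself a path graph and~$P$ is a simple subpath of~$R_i$, the vertex set of~$P$ must be an index-contiguous interval of petal vertices. Since $r_{i,\ell} \in X_{i,\ell}$ and~$P$ is not hit by~$X_{i,\ell}$, this interval does not contain~$r_{i,\ell}$; hence it lies either entirely in $\{r_{i,1},\ldots,r_{i,\ell-1}\}$ or entirely in $\{r_{i,\ell+1},\ldots,r_{i,|V(R_i)|}\}$. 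The first case would have triggered Step~\ref{step:earlyout} and forced the canonical solution to be NIL, contradicting the hypothesis that $X_{i,\ell}$ is well defined. Therefore $P$ lies strictly to the right of~$r_{i,\ell}$, and the right endpoint added to $X_{i,\ell}$ has index $> \ell$. Combining the two rules with the initialization yields $\min(X_{i,\ell}) = \ell$.

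The only place where a genuine argument is needed is the geometric fact that a simple subpath of a petal is an index-contiguous interval, together with the observation that Step~\ref{step:earlyout} rules out the ``entirely to the left'' case; neither of these should present any real obstacle, so I expect the proof to be short and clean.
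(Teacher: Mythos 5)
Your proof is correct, and since the paper states this as an unproved Observation, your argument is exactly the intended justification: $r_{i,\ell}$ enters at initialization, Step~\ref{step:fillsize} only draws from $\{r_{i,\ell},\ldots,r_{i,|V(R_i)|}\}$, and any path considered in Step~\ref{step:hitpath} misses $r_{i,\ell}$ and cannot lie wholly to its left without triggering Step~\ref{step:earlyout}, so its right endpoint has index exceeding $\ell$. No gaps.
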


\begin{observation} \label{observation:partition}
Let~$X_{i,\ell}$ result from Definition~\ref{definition:canonical}, and assume that Step~\ref{step:earlyout} does not apply and that Step~\ref{step:fillsize} is never triggered during the procedure. Partition the interval~$\{r_{i,\ell}, \ldots, r_{i,\max(X_{i,\ell})}\}$ into~$|X_{i,\ell}|$ maximal subpaths that each end at a vertex of~$X_{i,\ell}$ and contain no other vertices of~$X_{i,\ell}$. Then, for every such subpath~$R'$ except the singleton subpath~$\{r_{i,\ell}\}$, there is a path in~$\P$ contained entirely within~$R'$.
\end{observation}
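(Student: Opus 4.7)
The plan is to identify, for each non-initial vertex of $X_{i,\ell}$, a specific path in~$\P$ that forced its addition in Step~\ref{step:hitpath}, and then show this witness lies entirely within the corresponding piece of the partition. Enumerate the vertices of $X_{i,\ell}$ in increasing index order as $r_{i,j_1}, r_{i,j_2}, \ldots, r_{i,j_p}$; by Observation~\ref{observation:leftmost} we have $j_1 = \ell$, so the partition described in the statement consists of the singleton $\{r_{i,j_1}\}$ together with the intervals $R'_k := \{r_{i,j_{k-1}+1}, \ldots, r_{i,j_k}\}$ for $k = 2, \ldots, p$, and it suffices to produce, for each $k \geq 2$, a path of~$\P$ lying inside $R'_k$.

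As a small preliminary I would observe that the vertices added by Step~\ref{step:hitpath} enter $X_{i,\ell}$ in strictly increasing order of index. This follows from the minimum-right-endpoint selection rule: if $X_{i,\ell}$ currently equals $\{r_{i,j_1}, \ldots, r_{i,j_{k-1}}\}$ then any still-unhit path contained in~$R_i$ must sit in one of the components of $R_i$ avoiding this set; all internal components were already settled in earlier iterations (otherwise the minimum-right-endpoint rule would have picked from them first), so any remaining candidate lies strictly to the right of $r_{i,j_{k-1}}$ and hence the next addition has index $>j_{k-1}$, which by our enumeration is $j_k$. This uses the assumption that Step~\ref{step:fillsize} never triggers, so every non-initial vertex of $X_{i,\ell}$ really comes from Step~\ref{step:hitpath}.

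Now fix $k \geq 2$ and let $P_k \in \P$ be the path that triggered the addition of $r_{i,j_k}$: at that moment $X_{i,\ell} = \{r_{i,j_1}, \ldots, r_{i,j_{k-1}}\}$, the path $P_k$ is contained in $R_i$, not intersected by $X_{i,\ell}$, and has right endpoint $r_{i,j_k}$. Because $R_i$ is a connected component of $G - \{z\}$ that is structurally a path, any simple path in $G$ whose vertex set lies in $V(R_i)$ is a contiguous subpath of $R_i$. Hence $P_k = \{r_{i,j'}, r_{i,j'+1}, \ldots, r_{i,j_k}\}$ for some $j' \leq j_k$. If $j' \leq j_{k-1}$ then $r_{i,j_{k-1}} \in P_k$, contradicting the fact that $P_k$ is not hit by $X_{i,\ell}$ at that moment. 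Therefore $j' \geq j_{k-1}+1$ and $P_k \subseteq R'_k$, as required.

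I do not anticipate a real obstacle, only two points of care: (i) the ordering claim in the second paragraph, which is what lets us identify the state of $X_{i,\ell}$ at the moment $r_{i,j_k}$ is added, and (ii) the move from \emph{simple path contained in $R_i$} to \emph{contiguous subpath of $R_i$}, which uses that the flower-graph definition forces each component of $G - \{z\}$ to carry only the edges of its petal path.
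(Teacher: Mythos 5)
Your proof is correct: the ordering claim for Step~\ref{step:hitpath} additions, the identification of each non-initial vertex $r_{i,j_k}$ with the witness path that triggered it, and the use of the flower structure to ensure paths inside a petal are contiguous intervals are all sound, and together they establish the statement. The paper states this as an observation without proof, and your argument is precisely the intended justification implicit in the greedy minimum-right-endpoint rule of Definition~\ref{definition:canonical}, so there is nothing to compare beyond noting that you have filled in the details faithfully.
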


The main strategy behind our reduction of \HitPathsInFlower to \TORSTwoSat will be as follows. We will show that, if a solution to the hitting set problem exists, then there is a globally canonical solution. Such a solution can be fully characterized by indicating, for each petal, the index of the canonical solution on the petal (i.e., the leftmost vertex of the petal that is in the solution). Hence finding a solution reduces to finding a choice of canonical solutions on the petals. It turns out that for every path~$P \in \P$, one can create a signed 2-clause on the variables controlling the choices on two petals, such that the path is hit by the selected solution if and only if the indices of the canonical subsolutions satisfy the 2-clause. This allows the hitting set problem to be modeled by \TORSTwoSat. We now formalize these ideas. Let us first get a feeling for canonical solutions by proving the following lemma.

\begin{lemma} \label{lemma:contiguous}
Let~$(G,z,\P,b)$ be an instance of \HitPathsInFlower and let~$R_i$ be a petal. The indices for which~$R_i$ has a well-defined canonical solution form a contiguous set of integers.
\end{lemma}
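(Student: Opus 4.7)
The plan is to pinpoint the three distinct ways in which the procedure of Definition~\ref{definition:canonical} can output NIL for a petal~$R_i$, and to show that each of them holds either on a prefix or on a suffix of the index range~$\{1,\ldots,|V(R_i)|\}$. Contiguity of the well-defined indices then follows because an intersection of two prefixes and one suffix is itself a contiguous set.

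Inspecting the procedure, $X_{i,\ell}$ is NIL precisely when one of the following occurs: (a)~Step~\ref{step:earlyout} detects a path of~$\P$ contained in~$\{r_{i,1},\ldots,r_{i,\ell-1}\}$; (b)~the greedy loop of Step~\ref{step:hitpath} finishes with~$|X_{i,\ell}|>b(i)$; or (c)~Step~\ref{step:fillsize} runs out of vertices in~$\{r_{i,\ell},\ldots,r_{i,|V(R_i)|}\}\setminus X_{i,\ell}$ before reaching~$|X_{i,\ell}|=b(i)$. Condition~(a) is a suffix condition in~$\ell$, because the set~$\{r_{i,1},\ldots,r_{i,\ell-1}\}$ grows monotonically with~$\ell$; hence there is a threshold~$\ell^*$ such that (a) holds iff~$\ell\ge \ell^*$. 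Condition~(c) is also a suffix condition: since~$X_{i,\ell}\subseteq\{r_{i,\ell},\ldots,r_{i,|V(R_i)|}\}$ throughout, the final size is at most~$|V(R_i)|-\ell+1$, so (c) is equivalent to~$\ell>|V(R_i)|-b(i)+1$.

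The heart of the argument is to analyze (b). Let~$k(\ell)$ denote the size of~$X_{i,\ell}$ immediately after Step~\ref{step:hitpath} (assuming Step~\ref{step:earlyout} did not fire). After the initialization~$X_{i,\ell}=\{r_{i,\ell}\}$, any path of~$\P\cap R_i$ with right endpoint smaller than~$\ell$ would already have triggered Step~\ref{step:earlyout}, and any path containing~$r_{i,\ell}$ is hit by the initial vertex; hence the paths that the loop still has to cover are exactly those contained in the suffix~$\{r_{i,\ell+1},\ldots,r_{i,|V(R_i)|}\}$. The loop is then precisely the standard earliest-deadline-first greedy for interval point stabbing, which is well known to be optimal. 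Therefore~$k(\ell)-1$ equals the minimum number of vertices in~$\{r_{i,\ell+1},\ldots,r_{i,|V(R_i)|}\}$ needed to stab every path of~$\P$ contained there. As~$\ell$ increases, both the set of eligible paths and the set of eligible stabbing vertices can only shrink, so this minimum, and hence~$k(\ell)$, is non-increasing in~$\ell$. Consequently~(b) holds on an initial (prefix) segment of indices.

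Combining the three observations, the set of~$\ell$ for which~$X_{i,\ell}$ is well-defined is the intersection of the prefix~$\{1,\ldots,\ell^*-1\}$, the prefix~$\{1,\ldots,|V(R_i)|-b(i)+1\}$, and the suffix on which~$k(\ell)\le b(i)$, which is a contiguous (possibly empty) interval of integers, as claimed. The one non-trivial ingredient, and the main obstacle, is the monotonicity of~$k(\ell)$; once the loop in Step~\ref{step:hitpath} is identified with an optimal interval-stabbing algorithm, everything else is immediate from the definition.
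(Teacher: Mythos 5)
Your proof is correct. Its skeleton---isolating the three causes (a), (b), (c) of a NIL outcome in Definition~\ref{definition:canonical} and showing each is monotone in~$\ell$---is the same decomposition the paper uses, only phrased there as a contradiction with $\ell_1<\ell_2<\ell_3$ (causes (a) and (c) propagate from $\ell_2$ forward to $\ell_3$; cause (b) propagates from $\ell_2$ back to $\ell_1$). Where you genuinely diverge is the hard case (b): you identify the loop of Step~\ref{step:hitpath} as the optimal earliest-right-endpoint greedy for interval stabbing, so $k(\ell)-1$ equals the minimum piercing number of the target paths contained in the suffix starting at $\ell+1$, and monotonicity of $k$ follows from monotonicity of that optimum under passing to a subfamily. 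The paper instead re-derives the optimality certificate by hand: the partition of Observation~\ref{observation:partition} into $|X_{i,\ell_2}|$ subpaths, all but one containing a target path, is precisely the lower-bound witness that contradicts the existence of the smaller well-defined solution $X_{i,\ell_1}$. Your route is shorter and more conceptual at the cost of importing greedy optimality as a known fact; the paper's is self-contained and reuses Observation~\ref{observation:partition}, which it needs elsewhere anyway. One sentence of yours is misleading, though not fatal: ``the set of eligible stabbing vertices can only shrink'' would by itself push the minimum \emph{up}, not down. The correct justification is that restricting stabbing points to the suffix is vacuous---every path still to be hit lies entirely inside that suffix, so any useful stabbing point does too---and hence the minimum is governed solely by the shrinking family of paths.
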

\begin{proof}
Assume for a contradiction that there are~$\ell_1 < \ell_2 < \ell_3$ such that the canonical solutions for~$\ell_1$ and~$\ell_3$ are well-defined, but that for~$\ell_2$ is not. Let us consider why the canonical solution for~$\ell_2$ is not well-defined. 
\begin{enumerate}
	\item If Step~\ref{step:earlyout} applies for~$\ell_2$, then the path~$P \in \P$ that is contained entirely within~$\{r_{i,1}, \ldots, r_{i,\ell_2}\}$ also causes Step~\ref{step:earlyout} to apply for~$\ell_3 > \ell_2$, contradicting the assumption that there is a well-defined canonical solution for~$\ell_3$.
	\item If~$X_{i,\ell_2}$ is too small in Step~\ref{step:toolarge}, then this implies that there are less than~$b(i)$ vertices in~$\{r_{\ell_2}, \ldots, r_{\ell_{|V(R_i)|}}\}$. But this contradicts the fact that~$X_{i,\ell_3}$ has~$b(i)$ vertices and is a subset of~$\{r_{i,\ell_3}, \ldots, r_{i,|V(R_i)|}\}$ for~$\ell_3 > \ell_2$.
	\item If~$X_{i,\ell_2}$ is too large in Step~\ref{step:toolarge}, then its size exceeds~$|X_{i,\ell_1}| = b(i)$. Hence the precondition to Step~\ref{step:fillsize} never applied during the procedure for~$\ell_2$. Consider the partition of the interval~$\{r_{i,\ell_2}, \ldots, r_{i,\max (X_{i,\ell_2})}\}$ into~$|X_{i,\ell_2}|$ subpaths as described in Observation~\ref{observation:partition}. Every subpath in the partition contains exactly one vertex of~$X_{i,\ell_2}$, and all vertices of~$X_{i,\ell_2}$ are in one such subpath. Observe that~$r_{i,\ell_1} \in X_{i,\ell_1} \setminus X_{i,\ell_2}$, such that~$X_{i,\ell_1}$ contains at most~$b(i) - 1$ vertices in the interval~$r_{i,\ell_2}, \ldots, r_{i,|V(R_i)|}$. Since~$|X_{i,\ell_2}| > |X_{i,\ell_1}| = b(i)$, there are at least two subpaths in the partition from which~$X_{i,\ell_1}$ contains no vertex. Hence there is such a subpath, say~$R' := \{r_{i,p}, \ldots, r_{i,q}\}$, that is not the singleton path~$\{r_{i,\ell_2}\}$ and that contains no vertices of~$X_{i,\ell_1}$. Then, by Observation~\ref{observation:partition}, there is a target path~$P$ in~$\P$ that is entirely contained within~$R'$. But~$X_{i,\ell_1}$ contains no vertex of this path, showing that~$X_{i,\ell_1}$ does not intersect~$P$, which contradicts the fact that the while-loop of Step~\ref{step:hitpath} terminated when defining~$X_{i,\ell_1}$.
\end{enumerate}
As the cases are exhaustive, this concludes the proof.
\qed
\end{proof}

As the procedure of Definition~\ref{definition:canonical} can be implemented in polynomial time, the set of indices for which a petal has a canonical solution can be computed in polynomial time. We continue describing the structure of canonical solutions.

\begin{lemma} \label{lemma:latersolutionsreachfurther}
Let~$(G,z,\P,b)$ be an instance of \HitPathsInFlower and let~$R_i$ be a petal. If~$\ell_1 < \ell_2$, and the $\ell_1$-th and the~$\ell_2$-th canonical solutions are well-defined as~$X_{i,\ell_1}$ and~$X_{i,\ell_2}$, then~$\max(X_{i,\ell_1}) \leq \max(X_{i,\ell_2})$.
\end{lemma}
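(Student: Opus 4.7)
My plan is to split on whether Step~\ref{step:fillsize} of Definition~\ref{definition:canonical} is activated during the construction of~$X_{i,\ell_2}$. If it is, then since Step~\ref{step:fillsize} repeatedly inserts the highest-indexed available vertex, $r_{i,|V(R_i)|}$ necessarily ends up in $X_{i,\ell_2}$, so $\max(X_{i,\ell_2}) = |V(R_i)| \geq \max(X_{i,\ell_1})$ and the claim is immediate.

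In the remaining case Step~\ref{step:hitpath} must have added exactly $b(i) - 1$ vertices to the initial singleton $\{r_{i,\ell_2}\}$. To establish the same for $X_{i,\ell_1}$, I would introduce $\P_\ell \subseteq \P$ consisting of those paths contained in $R_i$ whose leftmost vertex has index strictly larger than $\ell$; assuming Step~\ref{step:earlyout} did not fire, $\P_\ell$ is exactly the set of paths that Step~\ref{step:hitpath} must hit after initializing $X_{i,\ell}$ with $r_{i,\ell}$. From $\ell_1 < \ell_2$ we have $\P_{\ell_2} \subseteq \P_{\ell_1}$, so every hitting set for $\P_{\ell_1}$ also hits $\P_{\ell_2}$. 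Since Step~\ref{step:hitpath} is the textbook greedy interval-stabbing algorithm and is therefore optimal, the number of vertices it adds during the construction of $X_{i,\ell_1}$ is at least $b(i) - 1$. A strictly larger number would trigger Step~\ref{step:toolarge} to return NIL, contradicting that $X_{i,\ell_1}$ is well-defined. Hence exactly $b(i) - 1$ vertices are added in both constructions.

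With the constructions aligned, list the added indices as $v_1 < \cdots < v_{b(i)-1}$ for $X_{i,\ell_1}$ and $w_1 < \cdots < w_{b(i)-1}$ for $X_{i,\ell_2}$. Unrolling Step~\ref{step:hitpath} yields the recursive description: $v_{j+1}$ is the smallest rightmost-vertex-index over paths $P \in \P_{\ell_1}$ whose leftmost index exceeds $v_j$ (taking $v_0 := \ell_1$), with the analogous formula for $w_{j+1}$ relative to $\P_{\ell_2}$ and $w_0 := \ell_2$. The heart of the proof is a simultaneous induction showing $v_j \leq w_j$. The base case $j=1$ follows directly from $\P_{\ell_2} \subseteq \P_{\ell_1}$. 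For the inductive step, any path $P$ considered in the definition of $w_{j+1}$ lies in $\P_{\ell_2} \subseteq \P_{\ell_1}$ and has leftmost index exceeding $w_j \geq v_j$, so it is also a candidate for $v_{j+1}$, giving $v_{j+1} \leq w_{j+1}$. Because every greedy-added vertex has index strictly larger than the initialization index, $\max(X_{i,\ell_1}) = v_{b(i)-1}$ and $\max(X_{i,\ell_2}) = w_{b(i)-1}$ whenever $b(i) \geq 2$, so the induction completes the argument; for $b(i) = 1$ the two maxima simply equal $\ell_1$ and $\ell_2$. I expect the main non-routine point to be the alignment of the two Step~\ref{step:hitpath} processes via the monotonicity of greedy interval stabbing---everything else reduces to tracking definitions.
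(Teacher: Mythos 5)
Your argument is correct, but it takes a genuinely different route from the paper. The paper proves the lemma by contradiction: assuming~$\max(X_{i,\ell_1}) > \max(X_{i,\ell_2})$, it applies Observation~\ref{observation:partition} to~$X_{i,\ell_2}$, partitions~$\{r_{i,\ell_2},\ldots,r_{i,\max(X_{i,\ell_2})}\}$ into~$b(i)$ subpaths each containing a target path (except possibly the singleton~$\{r_{i,\ell_2}\}$), and counts: since~$X_{i,\ell_1}$ spends one vertex at~$r_{i,\ell_1}$ (left of the interval) and one vertex right of~$\max(X_{i,\ell_2})$, it misses at least two subpaths, hence misses a target path, contradicting Step~\ref{step:hitpath}. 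You instead prove the inequality directly by a step-by-step monotonicity induction on the two greedy runs ($v_j \leq w_j$), after first aligning their lengths via the optimality of greedy interval stabbing. Both arguments rest on the same underlying structure---the paper's Observation~\ref{observation:partition} is essentially the disjoint-intervals certificate that also underlies the greedy-optimality fact you invoke---but your version imports that optimality as a black box and additionally relies on the (routine but unproven) characterization that after~$j$ greedy steps the unhit paths are exactly those with leftmost index exceeding~$v_j$; the paper's version avoids both by reusing its already-established observation. Your proof is more self-contained in spirit and makes the greedy monotonicity explicit, which is arguably more illuminating; the paper's is shorter given the machinery it has already set up. One small merit of your write-up worth noting: you handle the case where Step~\ref{step:fillsize} fires for~$\ell_2$ up front, whereas the paper dispatches it with a one-line remark inside the contradiction; both are fine.
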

\begin{proof}
Assume that~$\max(X_{i,\ell_1}) > \max(X_{i,\ell_2})$. We aim to apply Observation~\ref{observation:partition} to derive a contradiction. Since both canonical solutions are well defined, Step~\ref{step:earlyout} does not apply to~$\ell_2$. As our assumption implies that the rightmost vertex of~$R_i$ is not in~$X_{i,\ell_2}$, it follows that Step~\ref{step:fillsize} never applied during the procedure for~$\ell_2$. Since at least one vertex of~$X_{i,\ell_1}$ lies right of~$\max(X_{i,\ell_2})$, and~$X_{i,\ell_1}$ contains vertex~$r_{i,\ell_1}$ that lies left of~$r_{i,\ell_2}$, the partition of~$\{r_{i,\ell_2}, \ldots, r_{i,\max(X_{i,\ell_2})}\}$ into~$b(i)$ subpaths described by Observation~\ref{observation:partition} contains at least two subpaths from which~$X_{i,\ell_1}$ contains no vertex. Hence there is such a subpath~$R'$ that is not intersected by~$X_{i,\ell_1}$ for which there is a target path~$P \in \P$ contained entirely within~$R'$. This contradicts the fact that~$X_{i,\ell_1}$ hits all paths contained entirely within~$R_i$ by Step~\ref{step:hitpath} of Definition~\ref{definition:canonical}.
\qed
\end{proof}

We now establish that the hitting set problem has a globally canonical solution, if it has a solution at all. The proof exploits the fact that, after selecting the leftmost vertex of a petal to be used in the hitting set, removing it from the graph, and removing the paths hit by this vertex from the graph, the remainder of the petal turns into a pendant path that connects to the rest of the graph at vertex~$z$. The hitting set problem has a greedy solution within this resulting path, which reflects the structure of the canonical solution. Formalizing this line of reasoning is tedious but straight-forward.

\begin{lemma} \label{lemma:existscanonical}
Let~$(G,z,\P,b)$ be an instance of \HitPathsInFlower having petals~$R_1, \ldots, R_n$. If the instance has a solution~$X'$, then it has a globally canonical solution~$X$. 
\end{lemma}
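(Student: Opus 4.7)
The plan is to build a globally canonical solution $X$ from $X'$ petal by petal. For each petal $R_i$ set $\ell_i := \min\{j : r_{i,j} \in X'\}$, which is well-defined since $b(i) \geq 1$ forces $X' \cap V(R_i) \neq \emptyset$. The candidate solution is $X := \bigcup_{i=1}^n X_{i,\ell_i}$, with $X_{i,\ell_i}$ the canonical $\ell_i$-th solution from Definition~\ref{definition:canonical}.

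First I verify that each $X_{i,\ell_i}$ is well-defined by ruling out the three NIL outcomes. Step~\ref{step:earlyout} cannot fire, since a path of $\P$ contained in the prefix $\{r_{i,1},\ldots,r_{i,\ell_i-1}\}$ of $R_i$ would have to be hit by a vertex inside that same prefix, yet by the choice of $\ell_i$ no such vertex lies in $X'$. The loop of Step~\ref{step:hitpath} is the standard leftmost-right-endpoint greedy for hitting intervals on the line $R_i$, and is size-optimal among hitting sets that are forced to use $r_{i,\ell_i}$; since $X' \cap V(R_i)$ is one such set of size $b(i)$, Step~\ref{step:hitpath} produces at most $b(i)$ vertices. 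Step~\ref{step:fillsize} can top the set up to $b(i)$ because $X' \cap V(R_i)$ already witnesses that $\{r_{i,\ell_i},\ldots,r_{i,|V(R_i)|}\}$ contains at least $b(i)$ vertices.

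Next I check that $X$ hits every $P \in \P$. Choose any witness $v \in X' \cap V(P)$; since $z \notin X'$, this $v$ lies in some $V(R_j)$, and $V(P) \cap V(R_j)$ is either all of $R_j$, a prefix $\{r_{j,1},\ldots,r_{j,a}\}$, a suffix $\{r_{j,b},\ldots,r_{j,|V(R_j)|}\}$, or a disjoint prefix-and-suffix union (which can occur only when both endpoints of $R_j$ are adjacent to $z$ and $P$ enters $R_j$ through both of them via $z$). When $P \subseteq V(R_j)$, Step~\ref{step:hitpath} directly ensures $X_{j,\ell_j}$ hits $P$. When $v$ lies in the prefix portion, writing $v = r_{j,c}$ yields $\ell_j \leq c \leq a$, so Observation~\ref{observation:leftmost} places $r_{j,\ell_j} \in X_{j,\ell_j}$ inside that prefix.

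\textbf{Main obstacle.} The only step requiring real work is the suffix case $v = r_{j,c}$ with $c \geq b$, where I must show $\max(X_{j,\ell_j}) \geq b$; I prove the stronger statement $\max(X_{j,\ell_j}) \geq \max(X' \cap V(R_j))$. If Step~\ref{step:fillsize} fires when building $X_{j,\ell_j}$, its first iteration adds $r_{j,|V(R_j)|}$ and the inequality is immediate. Otherwise Step~\ref{step:hitpath} terminated with exactly $b(j)$ vertices, so by Observation~\ref{observation:partition} the interval from $r_{j,\ell_j}$ to $\max(X_{j,\ell_j})$ splits into $b(j)$ subpaths, each ending at a distinct vertex of $X_{j,\ell_j}$, and the $b(j)-1$ non-singleton subpaths each contain an entire $\P$-path. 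Any hitting set for those $\P$-paths must place a distinct vertex in each of these subpaths; since $X' \cap V(R_j)$ has only $b(j)-1$ vertices outside $\{r_{j,\ell_j}\}$, it is forced to place exactly one per subpath, at index at most that subpath's right endpoint, which itself belongs to $X_{j,\ell_j}$. Hence $\max(X' \cap V(R_j)) \leq \max(X_{j,\ell_j})$, completing the plan.
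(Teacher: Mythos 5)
Your proof is correct and follows essentially the same route as the paper's: you take $\ell_i$ to be the leftmost index of $X'$ on each petal, establish the same two key facts (well-definedness of the $\ell_i$-th canonical solution, and $\max(X'\cap V(R_i))\leq\max(X_{i,\ell_i})$ via the partition of Observation~\ref{observation:partition}), and use them to show the canonical solutions still hit every path. The only difference is organizational --- you replace all petals simultaneously where the paper swaps one petal at a time by induction --- which changes nothing of substance.
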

\begin{proof}
Proof by induction on the number~$k$ of petals for which~$X' \cap R_i$ is not a canonical solution. When~$k=0$ the claim is trivial, so assume~$k > 0$ and let~$R_i$ be a petal such that~$X' \cap V(R_i)$ is not a canonical solution. Since~$X'$ is a solution, by the definition of \HitPathsInFlower we have~$|X' \cap V(R_i)| = b(i)$. Let~$\ell$ be the index of the leftmost vertex from~$X'$ on~$R_i$. Since~$b(i) \geq 1$, such a vertex exists. Let~$X'_{i,\ell} := X' \cap V(R_i)$.

\begin{claim}
The $\ell$-th canonical solution for~$R_i$ is well defined.
\end{claim}
\begin{proof}
Consider the set~$X_{i,\ell}$ resulting from the process of Definition~\ref{definition:canonical} and assume for a contradiction that the process defines the canonical $\ell$-th solution to be NIL. There are two cases that yield NIL; we treat them consecutively.
\begin{enumerate}
	\item If the canonical solution is NIL because there is a path~$P$ in~$\P$ that is contained entirely within~$\{r_{i,1}, \ldots, r_{i,\ell-1}\}$, then since~$\ell$ is the index of the leftmost vertex from~$X'$ on~$R_i$ we have~$X' \cap \{r_{i,1}, \ldots, r_{i,\ell-1}\} = \emptyset$. Consequently, the set~$X'$ does not intersect path~$P$, contradicting the assumption that~$X'$ is a solution.
	\item Consider the case that the canonical solution is NIL because the size of~$X_{i,\ell}$ is not equal to~$b(i)$ in Step~\ref{step:toolarge}.
	\begin{enumerate}
		\item If~$|X_{i,\ell}| < b(i)$, then by the while-loop of Step~\ref{step:fillsize}, all vertices of~$\{r_{i,\ell}, \ldots, \linebreak[1] r_{i,|V(R_i)|}\}$ are in~$X_{i,\ell}$. Since~$X'_{i,\ell} = X' \cap V(R_i)$ and the leftmost vertex of~$X'_{i,\ell}$ on~$R_i$ is~$\ell$, the set~$X'_{i,\ell}$ cannot contain more vertices than~$X_{i,\ell}$. But then~$|X'_{i,\ell}| < b(i)$, showing that~$X'$ is not a solution.
		\item If~$|X_{i,\ell}| > b(i)$, Step~\ref{step:fillsize} never applied during the procedure. Consider the partition of the interval~$\{r_{i,\ell}, \ldots, r_{i,\max(X_{i,\ell})}\}$ into~$|X_{i,\ell}|$ subpaths as described in Observation~\ref{observation:partition}. Since every subpath in the partition contains exactly one vertex of~$X_{i,\ell}$, and all vertices of~$X_{i,\ell}$ are in one such subpath, it follows from~$|X_{i,\ell}| > |X'_{i,\ell}| = b(i)$ that there is such a subpath, say~$R' := \{r_{i,p}, \ldots, r_{i,q}\}$, containing no vertices of~$X'_{i,\ell}$. Since~$X'_{i,\ell}$ and~$X_{i,\ell}$ both contain~$r_{i,\ell}$, we know~$R'$ is not~$\{r_{i,\ell}\}$. Then, by Observation~\ref{observation:partition}, there is a target path in~$\P$ that is entirely contained within~$R'$. But~$X'_{i,\ell}$ contains no vertex of this path, showing that~$X'$ is not a solution.
	\end{enumerate}
\end{enumerate}
As we covered all cases that lead to the canonical solution being NIL, this concludes the proof.
\claimqed
\end{proof}

In the remainder, let~$X_{i,\ell}$ be the $\ell$-th canonical solution for~$R_i$, which is well defined by the previous claim.

\begin{claim}
$\max(X'_{i,\ell}) \leq \max(X_{i,\ell})$.
\end{claim}
\begin{proof}
Consider the process of Definition~\ref{definition:canonical}. If the loop of Step~\ref{step:fillsize} was executed at least once, then the rightmost vertex of~$R_i$ is in~$X_{i,\ell}$ and the claim is trivially true. So assume that this is not the case, and assume for a contradiction that~$\max(X'_{i,\ell}) > \max(X_{i,\ell})$. Consider the partition of the interval~$\{r_{i,\ell}, \ldots, r_{i,\max(X_{i,\ell})}\}$ as in Observation~\ref{observation:partition}. Since~$|X_{i,\ell}| = |X'_{i,\ell}| = b(i)$ and at least one vertex of~$X'_{i,\ell}$ does not lie in the interval~$\{r_{i,\ell}, \ldots, r_{i,\max(X_{i,\ell})}\}$ since~$\max(X'_{i,\ell}) > \max(X_{i,\ell})$, it follows that there is a subpath~$R'$ in the partition from which~$X'_{i,\ell}$ contains no vertices, and which therefore cannot be the subpath~$\{r_{i,\ell}\}$. As there is a path~$P \in \P$ that is entirely contained within~$R'$ by Observation~\ref{observation:partition}, the fact that~$X'_{i,\ell}$ and therefore~$X'$ contains no vertices from~$R'$ shows that~$X'$ is not a solution; contradiction.
\claimqed
\end{proof}

Using the previous claim we can finish the proof. Consider the set~$X := (X' \setminus X'_{i,\ell}) \cup X_{i,\ell}$, whose size equals that of~$X'$. We show that~$X$ is a valid solution to the instance. To see that, observe that the budget constraints are trivially satisfied since~$|X'_{i,\ell}| = |X_{i,\ell}|$. To see that all paths in~$\P$ are hit by~$X'$, consider a path~$P \in \P$. If~$P$ is hit by~$X' \setminus X'_{i,\ell}$ then it is also hit by~$X$. If~$P$ is contained entirely within~$R_i$, then by Step~\ref{step:hitpath} of Definition~\ref{definition:canonical} the path~$P$ is hit by~$X_{i,\ell}$ and thus by~$X$. If~$P$ is not contained entirely within~$R_i$ and is not hit by~$X' \setminus X'_{i,\ell}$, then it enters the petal at the leftmost or rightmost vertex of the petal and contains a prefix or suffix of the petal. (Here we use the structure of the flower graph: the interior vertices of petal~$R_i$ are not adjacent to any other vertex in the graph, only to their predecessor and successor on~$R_i$.) Since~$X'_{i,\ell}$ hits~$P$, it follows from the structure of the path in the petal that the leftmost or rightmost vertex of~$X'_{i,\ell}$ on~$R_i$ hits~$P$. But since the leftmost vertex of~$X'_{i,\ell}$ is~$r_{i,\ell}$, which is also in~$X_{i,\ell}$, and the rightmost vertex of~$X'_{i,\ell}$ does not have larger index than the rightmost vertex of~$X_{i,\ell}$ by the previous claim, it follows that~$X_{i,\ell}$ also hits~$P$. Hence all paths in~$\P$ are hit by~$X$, which is therefore a valid solution. Since the number of petals for which~$X$ does not contain a canonical solution is less than for~$X'$, by induction it follows that there is a solution for the instance whose intersection with every petal is a canonical solution.
\qed
\end{proof}

\begin{lemma} \label{lemma:createliteral}
Let~$(G,z,\P,b)$ be an instance of \HitPathsInFlower. There is a polynomial-time algorithm that, given a path~$P$ (not necessarily contained in~$\P$) which is a suffix or a prefix of a petal~$R_i$, either correctly determines that no well-defined canonical solution for~$R_i$ hits~$P$, or produces a literal of the form~$x_i \geq c$ or~$x_i \leq c$ for~$c \in \mathbb{N}_{\geq 1}$, such that the following holds.
\begin{enumerate}
	\item If~$X$ is a globally canonical solution for the instance that hits~$P$ and contains the $\ell$-th canonical solution for petal~$R_i$, then the literal is satisfied by setting~$x_i = \ell$.
	\item If~$x_i = \ell$ satisfies the literal and the $\ell$-th canonical solution~$X_{i,\ell}$ is well-defined, then~$P$ is hit by~$X_{i,\ell}$.
\end{enumerate}
\end{lemma}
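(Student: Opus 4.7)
The plan is to split on whether $P$ is a prefix or a suffix of $R_i$. In each case, I will characterize the set of indices $\ell$ for which the well-defined $\ell$-th canonical solution $X_{i,\ell}$ hits $P$, and show that this set has the shape of a one-sided interval of integers, which is precisely what a single totally ordered regular sign literal encodes. The set of well-defined indices, together with each corresponding $X_{i,\ell}$, can be computed in polynomial time by running the procedure of Definition~\ref{definition:canonical} for each $\ell \in \{1,\ldots,|V(R_i)|\}$.

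Suppose first that $P$ is a prefix of $R_i$, say $P = \{r_{i,1},\ldots,r_{i,q}\}$. Since $P \subseteq V(R_i)$, a globally canonical solution $X$ with $X \cap V(R_i) = X_{i,\ell}$ intersects $P$ if and only if $X_{i,\ell}$ does. By Observation~\ref{observation:leftmost}, the leftmost vertex of any well-defined $X_{i,\ell}$ is $r_{i,\ell}$, so $X_{i,\ell}$ meets $P$ iff $\ell \leq q$. I would check whether some well-defined index is $\leq q$: if not, report that no canonical solution hits $P$; otherwise output the literal $x_i \leq q$. Both directions are immediate: (1) if $X_{i,\ell}$ hits $P$ then $\ell = \min(X_{i,\ell}) \leq q$, and (2) if $\ell \leq q$ and $X_{i,\ell}$ is well-defined then $r_{i,\ell} \in X_{i,\ell} \cap V(P)$.

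For the suffix case $P = \{r_{i,p},\ldots,r_{i,|V(R_i)|}\}$, the condition ``$X_{i,\ell}$ hits $P$'' is equivalent to $\max(X_{i,\ell}) \geq p$. The essential observation is that Lemma~\ref{lemma:latersolutionsreachfurther} provides monotonicity: over the well-defined indices, which form a contiguous interval by Lemma~\ref{lemma:contiguous}, the quantity $\max(X_{i,\ell})$ is non-decreasing in $\ell$. Hence the set of well-defined $\ell$ with $X_{i,\ell}$ hitting $P$ is either empty or upward-closed within the well-defined range, so it coincides with $\{\ell : \ell \geq \ell^*\}$ for some threshold $\ell^*$. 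I compute $\ell^*$ by scanning the well-defined indices in increasing order and testing $\max(X_{i,\ell}) \geq p$; if no such $\ell^*$ exists, report that no canonical solution hits $P$; otherwise output the literal $x_i \geq \ell^*$. Condition (1) follows from the defining minimality of $\ell^*$, since $X_{i,\ell}$ hitting $P$ gives $\max(X_{i,\ell}) \geq p$ and hence $\ell \geq \ell^*$. Condition (2) follows because any well-defined $\ell \geq \ell^*$ satisfies $\max(X_{i,\ell}) \geq \max(X_{i,\ell^*}) \geq p$ by Lemma~\ref{lemma:latersolutionsreachfurther}.

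The main substantive point is the suffix case: one needs to know that as the leftmost vertex of a canonical solution moves to the right, the rightmost vertex only moves to the right as well. This is exactly Lemma~\ref{lemma:latersolutionsreachfurther}, and without it the set of ``hitting'' indices could in principle be scattered and not encodable by a single $\geq c$ or $\leq c$ literal, which would break the intended reduction to \TORSTwoSat. The prefix case, by contrast, is essentially a one-line consequence of Observation~\ref{observation:leftmost}.
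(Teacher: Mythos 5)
Your proposal is correct and follows essentially the same route as the paper: the prefix case via Observation~\ref{observation:leftmost} yielding $x_i \leq \max(P)$, and the suffix case via the monotonicity of $\max(X_{i,\ell})$ from Lemma~\ref{lemma:latersolutionsreachfurther} together with the contiguity of well-defined indices from Lemma~\ref{lemma:contiguous}, yielding a threshold literal $x_i \geq \ell^*$. The only (harmless) difference is that you also allow the ``no canonical solution hits $P$'' outcome in the prefix case, whereas the paper always emits the literal there.
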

\begin{proof}
The definition of the literal depends on whether~$P$ is a suffix or a prefix of a petal. First consider the case that~$P$ is a prefix of petal~$R_i$. Observe that a well-defined canonical solution~$X_{i,\ell}$ for~$R_i$ hits~$P$ if and only if~$\ell \leq \max (P)$, since~$\max(P)$ marks the index of the end of the prefix of~$R_i$ used by~$P$, and~$\ell$ is the index of the leftmost vertex of the $\ell$-th canonical solution on the petal by Observation~\ref{observation:leftmost}. Hence for this case we obtain the literal~$x_i \leq \max (P)$.

Now consider the case that~$P$ is a suffix of petal~$R_i$. The situation is similar: a well-defined canonical solution~$X_{i,\ell}$ hits the suffix~$P$ if and only if~$\max(X_{i,\ell}) \geq \min(P)$, i.e., when the rightmost vertex of the canonical solution lies right of the starting point of the suffix~$P$. Since all canonical solutions for~$R_i$ can be computed in polynomial time, we can efficiently find the indices~$\ell$, if any, for which a canonical solution is well defined satisfying~$\max(X_{i,\ell}) \geq \min(P)$. The indices for which a canonical solution is well defined form a contiguous set by Lemma~\ref{lemma:contiguous}. If~$\max(X_{i,\ell}) \geq \min(P)$ holds for some~$\ell$, then for all~$\ell' \geq \ell$ for which a canonical solution is well defined we have~$\max(X_{i,\ell'}) \geq \min(P)$ by Lemma~\ref{lemma:latersolutionsreachfurther}. Hence we can determine the smallest value~$\ell^*$ for which this holds, and find that the canonical solution on~$R_i$ hits~$P$ if and only if its index is at least~$\ell^*$. Hence we obtain the literal~$x_i \geq \ell^*$. In the case that there is no well-defined canonical solution that hits the suffix, we report this instead.

The two correctness properties follow directly from the if-and-only-if nature of our arguments above.
\qed
\end{proof}

Using the lemmata developed so far, we can present a polynomial-time algorithm for the problem in flower graphs.

\begin{theorem} \label{theorem:pathsinflower:poly}
\HitPathsInFlower can be solved in polynomial time.
\end{theorem}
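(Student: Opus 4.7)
The strategy is to reduce \HitPathsInFlower to an instance of \TORSTwoSat, which is solvable in polynomial time as recalled in Section~\ref{section:prelims}. By Lemma~\ref{lemma:existscanonical} it suffices to decide whether a globally canonical solution exists; such a solution is determined by picking, for each petal $R_i$, an index $\ell_i$ such that the canonical $\ell_i$-th solution on $R_i$ is well-defined. I would introduce a variable $x_i$ for each petal $R_i$ whose value represents the chosen index. By Lemma~\ref{lemma:contiguous}, the set of valid indices for $R_i$ is a contiguous range of integers, computable in polynomial time by running the procedure of Definition~\ref{definition:canonical} for each $\ell$, and I would use (after renumbering) this range as the truth value set for $x_i$.

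For every target path $P \in \P$ I would produce at most one clause of size at most two whose satisfaction captures exactly the event that $P$ is hit, distinguishing three cases by the way $P$ uses the core $z$. First, if $P$ is entirely contained within one petal $R_i$, then Step~\ref{step:hitpath} of Definition~\ref{definition:canonical} ensures that every well-defined canonical solution on $R_i$ hits $P$, so no clause is needed. Second, if $z$ is an endpoint of $P$, then $P - \{z\}$ is a single prefix or suffix of some petal $R_i$; applying Lemma~\ref{lemma:createliteral} yields either a unit clause of the form $x_i \geq c$ or $x_i \leq c$, or the information that no canonical solution on $R_i$ hits this prefix/suffix, in which case I would immediately output \no. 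Third, if $z$ is an interior vertex of $P$, then $P - \{z\}$ splits into two sub-paths, each being a prefix or suffix of a petal; invoking Lemma~\ref{lemma:createliteral} on each sub-path produces two literals whose disjunction is the required 2-clause, and again if both sub-paths are reported as unhittable I would output \no.

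The resulting totally ordered regular signed 2-CNF formula is of polynomial size and can be solved in polynomial time, proving the theorem. Correctness follows by combining Lemma~\ref{lemma:existscanonical} with the two directions of Lemma~\ref{lemma:createliteral}: any satisfying assignment assembles well-defined canonical solutions into a globally canonical hitting set, and any globally canonical hitting set induces a satisfying assignment. The main subtle point I anticipate is the third case when both sub-paths of $P$ lie within the \emph{same} petal $R_i$, which is precisely when $R_i$ forms a cycle together with $z$ (the ``corner case'' hinted at in Section~\ref{section:prelims}). It produces a 2-clause whose two literals involve the single variable $x_i$, but this is still a legitimate signed 2-clause and is handled without modification by the \TORSTwoSat reduction from Appendix~\ref{app:twosat}.
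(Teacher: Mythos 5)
Your proposal matches the paper's proof essentially step for step: the same reduction to \TORSTwoSat with one variable per petal, the same three-way case analysis on how a target path meets the core, and the same use of Lemmas~\ref{lemma:contiguous}, \ref{lemma:existscanonical} and~\ref{lemma:createliteral} for correctness. The only cosmetic differences are that the paper keeps a common truth value set $[N]$ and enforces the valid index range via unit clauses rather than renumbering per variable, and it explicitly handles the degenerate path $P=\{z\}$ (immediate \no), which your case split silently skips.
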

\begin{proof}
We show how to reduce an instance~$(G,z,\P,b)$ with petals~$(R_1, \ldots, R_n)$ to an equivalent instance of the polynomial-time solvable \TORSTwoSat problem. The main work will be done by Lemma~\ref{lemma:createliteral} to create the literals of the formula. Let~$N := \max _{i \in [n]} |V(R_i)|$ be the maximum size of a petal. The truth value set for our multiple valued logic formula will be~$[N]$. We create a variable~$x_i$ for every petal~$i$. The clauses in the formula are produced as follows.

\begin{enumerate}
	\item For every petal index~$i \in [n]$, we compute the values of~$1 \leq \ell \leq |V(R_i)|$ for which the $\ell$-th canonical solution for petal~$R_i$ is well-defined, using the procedure of Definition~\ref{definition:canonical}. By Lemma~\ref{lemma:contiguous} these values form a contiguous interval, say~$\ell_1, \ldots, \ell_2$. We add the singleton clause~$x_i \geq \ell_1$ to the formula, as well as the singleton clause~$x_i \leq \ell_2$. If there is no well-defined canonical solution for~$R_i$ then, by Lemma~\ref{lemma:existscanonical}, the hitting set instance has no solution. In this case we simply output the answer \no.
	\item For every path~$P \in \P$ that is not contained entirely within a single petal (i.e., for every path that contains the core vertex~$z$ of the flower) we do the following. If~$P = \{z\}$ is the singleton path containing only vertex~$z$, then we output \no as a solution is not allowed to contain vertex~$z$; this path can never be hit. Otherwise, let~$P_1, P_2$ be the two connected components of~$P - \{z\}$. (In the exceptional case that~$P - \{z\}$ has only a single component because~$P$ has~$z$ as an endpoint, take~$P_1 = P_2$ to be equal to~$P - \{z\}$.) For~$k \in \{1,2\}$ let~$R_{i_k}$ be the petal containing~$P_k$ and invoke Lemma~\ref{lemma:createliteral} on~$P_k$ with~$R_{i_k}$. If the invocations for both values of~$k$ produce a literal, say~$\phi_1$ and~$\phi_2$, then add the disjunction~$\phi_1 \vee \phi_2$ as a 2-clause to the formula. If one invocation concludes that no well-defined canonical solution hits the path, but the other invocation produces a literal, then add a singleton clause with the latter literal. Finally, if neither~$P_1$ nor~$P_2$ produces a literal, then neither of the subpaths of~$P - \{z\}$ are hit by any well-defined canonical solution, and therefore the path~$P$ is not hit by any canonical solution. (Recall that solutions are forbidden to contain~$z$.)  Since, by Lemma~\ref{lemma:existscanonical}, a canonical solution exists if a solution exists at all, it follows that we can safely output \no and halt.
\end{enumerate}

The process above results in a totally ordered regular signed 2-SAT formula~$\Phi$ on~$n$ variables with~$\Oh(n + |\P|)$ clauses, which is polynomial in the size of the total input. All numbers involved are in the range~$[N]$ which is bounded by the order of the input graph~$G$. The reduction can therefore be performed in polynomial time, and produces an instance of \TORSTwoSat of polynomial size, even when encoding the numbers in unary. It remains to prove correctness of the reduction.

\begin{claim}
Formula~$\Phi$ is satisfiable if and only if~$(G,z,\P,b)$ has a solution.
\end{claim}
\begin{proof}
($\Rightarrow$) Suppose that the formula is satisfiable and consider a satisfying assignment to the variables~$x_1, \ldots, x_n$. Since the assignment satisfies the first type of clauses introduced, for every petal index~$i \in [n]$, if~$x_i = \ell$ then the $\ell$-th canonical solution for~$R_i$ is well-defined. Initialize~$X$ as an empty solution set. For each~$i \in [n]$ add the canonical solution for~$R_i$ whose index is given by~$x_i$ to the set~$X$. Since well-defined canonical solutions for~$i$ have size~$b(i)$ by Definition~\ref{definition:canonical}, this satisfies the budget constraints of the problem since the only vertices of~$R_i$ added to~$X$ are those of the canonical solution employed on that petal. As we trivially do not include~$z$ in the solution~$X$, to verify that~$X$ is a valid solution it remains to check that~$X$ intersects all paths in~$\P$. To this end, consider an arbitrary path~$P \in \P$.

\begin{enumerate}
	\item If~$P$ is contained entirely within one petal, say~$R_i$, then observe that any well-defined canonical solution for petal~$R_i$ hits~$P$ by Step~\ref{step:hitpath} of Definition~\ref{definition:canonical}. Since a canonical solution for~$R_i$ is included in~$X$, the path~$P$ is hit.
	\item If~$P$ is not contained entirely within one petal, then by the structure of flower graphs we know that~$P$ contains vertex~$z$ and was considered in the second phase of the construction. Consider the clause created on account of~$P$ during the construction above. Since the formula satisfies the clause, at least one literal is satisfied; say the literal for the subpath~$P_k$ of~$P - \{z\}$ residing in petal~$R_{i_k}$. Then Lemma~\ref{lemma:createliteral} guarantees that the canonical solution employed on~$R_{i_k}$ hits~$P_k$, and therefore hits the larger path~$P$ as well.
\end{enumerate}

\noindent As~$X$ hits all paths in~$\P$, this proves the forward direction.

($\Leftarrow$) For the reverse direction, suppose that~$(G,z,\P,b)$ has a solution. By Lemma~\ref{lemma:existscanonical} there is a globally canonical solution~$X$. For every petal index~$i$ let~$\ell_i$ be such that~$X$ includes the $\ell_i$-th canonical solution on~$R_i$, and assign variable~$x_i$ the value~$\ell_i$. Let us check that this assignment satisfies the formula. Every clause of the first type is satisfied by any setting corresponding to the index of a canonical solution, which is clearly the case. For the clauses of the second type that are produced on account of paths~$P \in \P$, observe that~$X$ intersects such a path in a connected component of~$P - \{z\}$, since~$z \not \in X$. By Lemma~\ref{lemma:createliteral} this implies that the corresponding literal of the clause is satisfied, implying that the entire clause is satisfied. Hence all types of clauses are satisfied, showing the formula to be satisfiable.
\claimqed
\end{proof}

The claim shows that to solve the hitting set problem, it suffices to check the satisfiability of the polynomial-sized \TORSTwoSat instance. As the latter can be done in polynomial time, this proves Theorem~\ref{theorem:pathsinflower:poly}.
\qed
\end{proof}

\subsection{Hitting Paths in Graphs} \label{subsection:hit:paths}

In this section we will show that an instance of \HitPathsInGraph can be reduced to~$2^{5k}$ instances of \HitPathsInFlower. By the results of the previous section, this leads to an FPT algorithm. 

We will frequently use the following observation. It formalizes that if~$v$ is a degree-one vertex in~$G$ and we are looking for a set that hits all paths in~$\P$, then either there is a single-vertex path~$P = \{v\} \in \P$, forcing~$v$ to be in any solution, or there is an optimal solution that does not contain~$v$.

\begin{observation} \label{observation:removeleaf}
Let~$(G,k,t,\P)$ be an instance of \HitPathsInGraph and let~$v \in V(G)$ have degree at most one. 
\begin{enumerate}
	\item If the singleton path~$P = \{v\}$ is contained in~$\P$, then~$(G,k,t,\P)$ is equivalent to the instance obtained by decreasing~$t$ by one, removing~$v$ from the graph, and removing all paths containing~$v$ from~$\P$.
	\item Otherwise,~$(G,k,t,\P)$ is equivalent to the instance obtained by removing~$v$ from the graph and replacing every path~$P \in \P$ by~$P \setminus \{v\}$.
\end{enumerate}
The cyclomatic number is not affected by these operations.
\end{observation}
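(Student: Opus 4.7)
\medskip

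\noindent\textbf{Proof proposal.} The plan is to handle the two cases separately by standard exchange arguments, then verify that the cyclomatic number is preserved by a direct count of vertices, edges, and components.

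For the first case, observe that the singleton path~$P=\{v\}$ can only be hit by~$v$ itself, so every solution for~$(G,k,t,\P)$ must contain~$v$. Writing~$\P' = \{P \in \P : v \notin P\}$, I would argue that~$X$ is a solution of size~$\leq t$ for~$(G,k,t,\P)$ if and only if~$v \in X$ and~$X \setminus \{v\}$ is a solution of size~$\leq t-1$ for~$(G - v, k, t-1, \P')$. The forward direction follows because the paths in~$\P \setminus \P'$ are precisely those hit by~$v$, and~$X \setminus \{v\} \subseteq V(G-v)$. The reverse direction is immediate from adding~$v$ back.

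For the second case, the core claim is that there exists an optimal hitting set avoiding~$v$. If~$\deg(v) = 0$ then no path of length~$\geq 1$ can contain~$v$, and by assumption~$\{v\} \notin \P$, so~$v$ hits no path of~$\P$; any solution containing~$v$ can be shrunk by removing~$v$. If~$\deg(v) = 1$ with unique neighbor~$u$, then~$v$ can only appear as an endpoint of a path, and any such path of length~$\geq 1$ must continue through~$u$; hence every path in~$\P$ containing~$v$ also contains~$u$. Thus replacing~$v$ by~$u$ in any hitting set yields a hitting set of the same size avoiding~$v$. Given this, the equivalence to the reduced instance~$(G - v, k, t, \P'')$ with~$\P'' = \{P \setminus \{v\} : P \in \P\}$ is immediate: since we may restrict attention to solutions~$X \subseteq V(G) \setminus \{v\}$, we have~$X \cap P = X \cap (P \setminus \{v\})$ for every~$P \in \P$, so~$X$ hits~$\P$ in~$G$ if and only if~$X$ hits~$\P''$ in~$G - v$.

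For the cyclomatic number, I would use the general formula~$k = m - n + c$, where~$c$ is the number of connected components (so~$k$ equals the size of a minimum feedback edge set). Deleting an isolated vertex decreases both~$n$ and~$c$ by one while leaving~$m$ unchanged; deleting a degree-one vertex decreases~$n$ and~$m$ by one while leaving~$c$ unchanged, because a leaf's removal cannot disconnect its component. In either case~$k$ is preserved. The main (very mild) obstacle is being careful with the degree-zero subcase in the second reduction, since there the statement about ``paths containing~$v$'' is vacuous but still needs to be recorded explicitly.
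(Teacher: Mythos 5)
Your proposal is correct and follows essentially the same route as the paper, which states this as an observation without a formal proof and justifies it only by the remark that a singleton path~$\{v\}$ forces~$v$ into every solution, while otherwise an optimal solution avoiding~$v$ exists (your exchange of~$v$ for its unique neighbour~$u$ is the intended argument). Your explicit treatment of the isolated-vertex subcase and the count~$k = m - n + c$ for the cyclomatic number simply fill in details the paper leaves implicit.
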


For an instance~$(G,\P,k,t)$ of \HitPathsInGraph and a vertex subset~$S \subseteq V(G)$, the \emph{cost of the subgraph induced by~$S$}, denoted~$\opt(S)$, is defined as the minimum cardinality of a set that hits all paths~$P \in \P$ for which~$V(P) \subseteq S$. Equivalently,~$\opt(S)$ is the minimum cardinality of a set that hits all paths~$\{ P \in \P \mid V(P) \subseteq S\}$ in the graph~$G[S]$. Observe that if~$S$ induces an acyclic subgraph of~$G$, then this value is computable in polynomial time as discussed in Section~\ref{section:prelims}. To reduce the general \HitPathsInGraph problem to the version with budget constraints discussed in the previous section, the following lemma is useful for determining relevant values for the budgets.

\begin{lemma} \label{lemma:budget:on:paths}
Let~$(G,\P,k,t)$ be an instance of \HitPathsInGraph. Let~$S$ be the vertices of degree unequal to two in~$G$. There is a minimum-size hitting set~$X$ for~$\P$ such that, for every connected component~$C$ of~$G - S$, we have~$\opt(C) \leq |X \cap V(C)| \leq \opt(C)+1$.
\end{lemma}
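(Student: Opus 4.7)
The lower bound is automatic: every hitting set $X$ must hit all paths $P \in \P$ with $V(P) \subseteq V(C)$, and those can be hit only by vertices of $V(C)$, so $|X \cap V(C)| \geq \opt(C)$.

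For the upper bound, the plan is to start from an arbitrary minimum hitting set $X$ and perform a local repair on each component where the bound is violated. The structural observation driving the argument is that every vertex of $G - S$ has degree exactly two in $G$, so each connected component $C$ of $G - S$ is either a cycle component of $G$ or a path. For a path component with endpoints $c_1, c_\ell$, the only edges between $V(C)$ and $V(G) \setminus V(C)$ are incident to $c_1$ or $c_\ell$, because every interior vertex of $C$ has both its neighbors inside $C$. I would write $\partial C \subseteq S$ for the set of non-$C$ endpoints of these edges, setting $\partial C = \emptyset$ if $C$ is a cycle, and observe that $|\partial C| \leq 2$. Consequently, every path $P \in \P$ with $V(P) \cap V(C) \neq \emptyset$ and $V(P) \not\subseteq V(C)$ must traverse one of these gateway edges and therefore contains a vertex of $\partial C$.

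If $C$ is a cycle, then $C$ is an entire connected component of $G$, so every path in $\P$ meeting $V(C)$ lies inside $V(C)$; replacing $X \cap V(C)$ by a minimum hitter for the $C$-local paths (computed greedily as in Section~\ref{section:prelims}) cannot increase $|X|$, and since $X$ is minimum this forces $|X \cap V(C)| = \opt(C)$. If $C$ is a path and $|X \cap V(C)| \geq \opt(C) + 2$, I would define $X' := (X \setminus V(C)) \cup Y \cup \partial C$, where $Y \subseteq V(C)$ is a minimum hitter of the paths $P \in \P$ with $V(P) \subseteq V(C)$. The set $X'$ is still a hitting set: paths disjoint from $V(C)$ are hit by $X \setminus V(C) \subseteq X'$, the $C$-local paths are hit by $Y$, and the crossing paths are hit by $\partial C$. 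Since $Y \subseteq V(C)$ and $\partial C \subseteq S$ are disjoint from each other and from $X \setminus V(C)$, a size count gives $|X'| \leq |X| - |X \cap V(C)| + \opt(C) + |\partial C| \leq |X|$, so $X'$ is also a minimum hitting set while $|X' \cap V(C)| = |Y| = \opt(C)$.

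Because $Y \subseteq V(C)$ and $\partial C \subseteq S$ are both disjoint from $V(C'')$ for every other component $C'' \neq C$, the repair leaves $X \cap V(C'')$ unchanged, so one may fix the offending components independently; the process terminates because $|X \cap (V(G) \setminus S)|$ strictly decreases at each step. The single point I expect to need care is the structural claim that the only edges leaving $V(C)$ land in $\partial C$; once that is established, the rest of the argument is a straightforward accounting.
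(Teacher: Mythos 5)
Your proof is correct and takes essentially the same route as the paper: both perform the local exchange $X \mapsto (X \setminus V(C)) \cup X_C \cup N_G(C)$ on each offending component, using $|N_G(C)| \leq 2$ together with $|X \cap V(C)| \geq \opt(C)+2$ for the size accounting, and both rely on the same structural fact that crossing paths must pass through the at most two boundary vertices in $S$. The only cosmetic differences are that you treat cycle components as a separate case (the paper's bound $|N_G(C)| \leq 2$ absorbs them silently via $N_G(C)=\emptyset$) and you argue termination via a decreasing potential rather than induction on the number of offending components.
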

\begin{proof}
The fact that~$X \cap V(C) \geq \opt(C)$ for all hitting sets of~$\P$ follows trivially since~$X \cap V(C)$ is a hitting set for the induced subinstance. For the other inequality we exploit the structure of the graph.

Let~$X$ be a minimum-size hitting set for~$\P$, whose size may be less than~$t$. We give a proof by induction on the number of components for which~$X \cap V(C)$ exceeds~$\opt(C) + 1$. The statement is trivially true if this number is zero. Otherwise, fix a component~$C$ for which~$X \cap V(C) > \opt(C) + 1$. As~$C$ is a connected subgraph containing only vertices of degree two, the neighborhood~$N_G(C)$ has size at most two, and is contained within~$S$. Let~$X_C$ be a minimum-cardinality hitting set for the instance induced by~$C$, of size~$\opt(C)$. Consider the set~$X' := (X \setminus V(C)) \cup N_G(C) \cup X_C$, whose intersection with~$C$ has size~$\opt(C)$. Since~$|N_G(C) \cup X_C| \leq \opt(C) + 2$ and~$X \cap V(C) \geq \opt(C) + 2$, the set~$X'$ is not bigger than~$X$. We show it to be a hitting set as well. To see that, observe that all paths contained in~$G[C]$ are hit by~$X_C$, as it is a solution to the subproblem induced by~$C$. Any path intersecting~$C$ that is not hit by~$X_C$ was not included in the subinstance induced by~$C$, and hence contains a vertex of~$N_G(C)$. Such paths are therefore hit by~$X'$. All paths that do not intersect~$C$ are hit by~$X \setminus V(C)$, and are therefore hit by its superset~$X'$ as well. It follows that~$X'$ is a hitting set of minimum cardinality. As the number of components~$C$ from which it uses at least~$\opt(C) + 2$ vertices is strictly smaller than for~$X$ (the vertices of~$N_G(C)$ belong to~$S$ and therefore do not increase this number, as components are taken of~$G-S$), the proof now follows by induction.
\qed
\end{proof}

Using these ingredients we give an algorithm for \HitPathsInGraph.

\begin{theorem} \label{theorem:pathsingraph:fpt}
\HitPathsInGraph parameterized by cyclomatic number can be solved in time~$2^{5k} \cdot (|G| + |\P|)^{\Oh(1)}$.
\end{theorem}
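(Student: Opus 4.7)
The plan is to reduce an instance $(G,k,t,\P)$ of \HitPathsInGraph to at most $2^{5k}$ instances of \HitPathsInFlower, each solvable in polynomial time by Theorem~\ref{theorem:pathsinflower:poly}. Using Observation~\ref{observation:removeleaf} exhaustively, I first eliminate vertices of degree at most one, which preserves $k$ and leaves a graph of minimum degree at least two; the sub-case where the remainder is a disjoint union of cycles is handled via the cycle trick from Section~\ref{section:prelims} at the cost of an extra polynomial factor. Otherwise, let $S \subseteq V(G)$ be the set of vertices of degree at least three: Proposition~\ref{proposition:degtwo} gives $|S| \leq 2k-2$, and since every component of $G-S$ is a path, Proposition~\ref{proposition:numcomponents} bounds their number by $k+|S|-1 \leq 3k-3$.

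The algorithm branches in two dimensions: (i) over subsets $S_{\text{in}} \subseteq S$ representing the intended intersection of the solution with $S$, giving $2^{|S|}$ options; and (ii) over assignments of a budget $b(C) \in \{\opt(C), \opt(C)+1\}$ to each component $C$ of $G-S$, giving $2^{|\mathrm{comp}|}$ options. Lemma~\ref{lemma:budget:on:paths} justifies the second branch, as some minimum hitting set uses either $\opt(C)$ or $\opt(C)+1$ vertices from each component. Altogether this yields at most $2^{|S|+|\mathrm{comp}|} \leq 2^{5k-5} \leq 2^{5k}$ branches. For each branch, I build a flower instance by deleting the vertices of $S_{\text{in}}$ (removing any paths of $\P$ they already hit and decreasing the target accordingly) and identifying $S_{\text{out}} := S \setminus S_{\text{in}}$ into a single core $z$; the components of $G-S$ become the petals with the guessed budgets, and Theorem~\ref{theorem:pathsinflower:poly} is invoked, answering \yes whenever the guessed budget total plus $|S_{\text{in}}|$ is at most $t$.

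The main obstacle is translating the remaining paths of $\P$ into path-constraints for the flower: a path $P \in \P$ that visits two or more vertices of $S_{\text{out}}$ becomes a non-simple walk through $z$ after identification, whereas the 2-SAT encoding of Theorem~\ref{theorem:pathsinflower:poly} only accepts simple paths (producing clauses of at most two literals). I would resolve this by decomposing each such $P$ into its maximal sub-paths inside single components of $G-S$, each of which is a prefix or suffix of the containing petal, and arguing via Lemma~\ref{lemma:latersolutionsreachfurther} and the flexibility afforded by the ``$+1$'' budget choice that across the $2^{5k}$ branches the correct sub-path responsible for hitting $P$ can be faithfully captured by 2-clauses without further enumeration. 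Combining this construction with Lemmata~\ref{lemma:existscanonical} and~\ref{lemma:budget:on:paths} and Theorem~\ref{theorem:pathsinflower:poly} then establishes correctness, and the total running time is $2^{5k} (|G|+|\P|)^{\Oh(1)}$.
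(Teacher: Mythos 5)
Your overall architecture matches the paper's: the same preprocessing via Observation~\ref{observation:removeleaf}, the same two-dimensional branching over $S' \subseteq S$ and over budgets $\opt(C)$ versus $\opt(C)+1$ justified by Lemma~\ref{lemma:budget:on:paths}, the same size bounds from Propositions~\ref{proposition:degtwo} and~\ref{proposition:numcomponents}, and the same plan of identifying $S \setminus S'$ into a core $z$ and invoking Theorem~\ref{theorem:pathsinflower:poly}. The gap sits exactly at the point you flag as ``the main obstacle,'' and your proposed resolution does not close it. First, your claim that every maximal sub-path of $P$ inside a component of $G-S$ is a prefix or suffix of that petal is false: since interior vertices of petals have degree two, a simple path can only \emph{partially} intersect the (at most two) components containing its endpoints, but it may \emph{fully traverse} arbitrarily many intermediate components, and those pieces are entire petals, not prefixes or suffixes. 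If several such pieces survive into the flower instance, hitting $P$ would require a clause with one literal per petal it meets, which is no longer a 2-clause; appealing to Lemma~\ref{lemma:latersolutionsreachfurther} and ``the flexibility of the $+1$ budget choice'' does not explain how this is avoided.

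The concrete idea you are missing is a per-branch cleanup of $\P$ \emph{before} the identification step. (a) If a path $P$ fully contains a component $C$ of $G-S$ with $b(C) > 0$, then any solution respecting the budgets already hits $P$ inside $C$, so $P$ can simply be deleted from the instance in that branch. (b) If $P$ fully contains a component $C$ with $b(C) = 0$, the sought solution avoids $C$ entirely, so the vertices of $C$ can be deleted from $P$ (and from the graph); the path must be hit elsewhere. After (a) and (b), every surviving path meets at most two components of $G-S$, each in a prefix or suffix, together with some vertices of $S \setminus S'$; identifying $S \setminus S'$ into $z$ then genuinely produces a \emph{simple} path through $z$ touching at most two petals, which is exactly what makes the 2-literal clause construction of Lemma~\ref{lemma:createliteral} and Theorem~\ref{theorem:pathsinflower:poly} applicable. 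Without this step the reduction to \TORSTwoSat does not go through.
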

\begin{proof}
When presented with an input~$(G,\P,k,t)$, the algorithm proceeds as follows. First, as a preprocessing step, the algorithm repeatedly removes vertices of degree at most one from the graph using Observation~\ref{observation:removeleaf}. If the resulting graph is empty, then we can simply decide the problem: the answer is \yes if and only if the value of~$t$ was not decreased below zero by these operations. Otherwise we obtain a graph with minimum degree at least two. While this graph is disconnected, add an arbitrary edge between two distinct connected components. This does not change the answer to the instance (the paths~$\P$ to be hit are unchanged) and leaves the cyclomatic number unchanged. From now on we therefore assume that the instance we work with has minimum degree at least two and consists of a connected graph. For ease of notation, we refer to instance resulting from these steps simply as~$(G,\P,k,t)$. If~$G$ consists of just a simple cycle (i.e.,~$G$ is 2-regular) then we can decide the problem in polynomial time as discussed in Section~\ref{section:prelims}, so we focus on the case that the set~$S$ of vertices of degree at least three is nonempty. By Proposition~\ref{proposition:degtwo}, the size of~$S$ is bounded by~$2k$. The main idea of the algorithm is to use branching make two successive guesses. 
\begin{itemize}
	\item First, we guess which vertices from~$S$ are used in a solution. Concretely, we try all subsets~$S' \subseteq S$ and test whether there is a solution~$X$ for which~$X \cap S = S'$. 
	\item For every such set~$S'$, we do the following. By Lemma~\ref{lemma:budget:on:paths}, there is a minimum-size hitting set that intersects every component~$C$ of~$G - S$ (which is a path) in either~$\opt(C)$ or~$\opt(C) + 1$ vertices. Let~$\C$ denote the set of these components. By Proposition~\ref{proposition:numcomponents}, we have~$|\C| < k + |S| \leq 3k$. We now guess the collection~$\C' \subseteq \C$ of components~$C$ for which the solution uses~$\opt(C)$ vertices. Every guess~$\C'$ defines a budget~$b(C)$ for each component~$C \in \C$ as follows: $b(C) = \opt(C)$ if~$C \in \C'$, and~$b(C) = \opt(C) + 1$ otherwise.
\end{itemize}

Having guessed both~$S'$ and~$\C'$, we create an instance of \HitPathsInFlower to verify whether there is a hitting set~$X$ for the paths~$\P$ such that~$X \cap S = S'$ and for all components~$C$ of~$G-S$ we have~$|X \cap C| = b(C)$. Observe that these constraints on~$X$ completely determine its size, which must be~$|S'| + \sum _{C \in \C} b(C)$. Hence if the size exceeds~$t$, then these guessed sets will not lead to a hitting set of the desired size, and can therefore be skipped. When we have a guess that leads to a hitting set size of at most~$t$, we aim to produce an instance of \HitPathsInFlower to check whether there is a solution consistent with the guesses. To this end, initialize~$G'$ as a copy of~$G$, and~$\P'$ as a copy of~$\P$. We modify these structures to create an input on a flower graph. Throughout these modifications there will be a clear correspondence between components of~$G' - S$ and those of~$\C$, so that we may refer to the budgets of components~$C$ of~$G' - S$. For each guess~$S'$ and~$\C'$, we proceed as follows.

\begin{enumerate}
	\item Remove all the vertices of~$S'$ from the graph~$G'$ and remove all paths hit by~$S'$ from~$\P'$.\label{flower:step:removeshitpaths}
	\item For all paths~$P \in \P'$ for which there is a component~$C$ of~$G' - S$ such that all vertices of~$C$ belong to~$P$ and~$b(C) > 0$, remove~$P$ from the set~$\P'$. All hitting sets that contain~$b(C)$ vertices from~$C$ must hit~$P$, so we can drop the constraint~$P$ because we will introduce a budget constraint on~$C$.\label{flower:step:removebudgethitpaths}
	\item For all components~$C$ of~$G' - S$ such that~$b(C) = 0$, do the following. Remove the vertices of~$C$ from the graph~$G'$. For every~$P \in \P'$, replace~$P$ by the subgraph~$P - V(C)$. This may cause the elements of~$\P'$ to become disconnected subgraphs, rather than paths, but this will be resolved in the next step.\label{flower:step:splitpaths}
	\item The final step identifies several vertices in the graph into a single core vertex, to obtain a flower structure. Concretely, update the graph~$G'$ by identifying all vertices of~$S \setminus S'$ into a single vertex~$z$. Similarly, update every subgraph~$P \in \P'$ by identifying all vertices of~$(S \setminus S') \cap V(P)$ into a single vertex~$z$.\label{flower:step:merge}
\end{enumerate}

\begin{figure}[t]
\begin{center}
\subfigure[]{\label{fig:toflower1}
\includegraphics{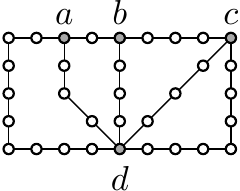}
}
\subfigure[]{\label{fig:toflower4}
\includegraphics{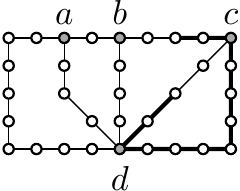}
}
\subfigure[]{\label{fig:toflower3}
\includegraphics{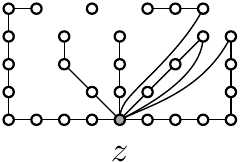}
}
\subfigure[]{\label{fig:toflower5}
\includegraphics{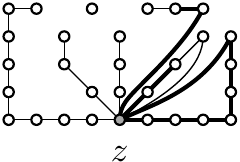}
}
\caption{(\ref{fig:toflower1}) A graph with cyclomatic number~$5$, whose vertices of degree~$\geq 3$ are~$S := \{a,b,c,d\}$. (\ref{fig:toflower4}) A simple path~$P$ in the graph. (\ref{fig:toflower3}) Illustration of reduction Steps~\ref{flower:step:removeshitpaths} and~\ref{flower:step:merge} in the algorithm for the guess~$S' = \{a,b\}$. Vertices~$a$ and~$b$ are deleted, while~$c$ and~$d$ are identified into a single vertex~$z$ to obtain a flower structure. (\ref{fig:toflower5}) Merging~$c$ and~$d$ into~$z$ turns~$P$ into a cyclic subgraph~$P'$. The bottom right petal is contained entirely within~$P'$. If its budget is positive, any solution hits~$P'$ in that petal, causing~$P$ to be removed in Step~\ref{flower:step:removebudgethitpaths}. If its budget is zero, the vertices of the petal are removed from~$P'$ instead in Step~\ref{flower:step:splitpaths}, to eliminate the cycle.}\label{fig:mergeintoflower}
\end{center}
\end{figure}

Let~$G^*, \P^*$ denote the resulting graph and system of subgraphs. Refer to Figure~\ref{fig:mergeintoflower} for an illustration of these steps.

\begin{numberedclaim}
$G^*$ is a flower with core~$z$ and all subgraphs in~$\P^*$ are simple paths.
\end{numberedclaim}
\begin{proof}
Let us first verify that~$G^*$ is indeed a flower. As every vertex of~$S$ was either removed or merged into~$z$, we find that~$G^* - \{z\}$ is a subgraph of~$G - S$. Since~$S$ is the set of vertices of degree unequal to two, and~$G$ had no vertices of degree at most one after preprocessing, every connected component of~$G - S$ consists of vertices that have degree two in~$G$ and therefore such components form paths. As~$S$ is not empty, every such component has exactly two neighbors in~$S$, which are adjacent to the first and last vertex of the path. Hence no interior vertex of such paths is adjacent to~$S$. Since~$G^* - \{z\}$ is a subgraph of~$G - S$ it follows that all connected components of~$G^* - \{z\}$ are paths and no interior vertex of such a path is adjacent to~$z$. Hence~$G^*$ is a flower with core~$z$.

We continue by proving the second part of the claim. Consider a subgraph~$P^*$ in the final set~$\P^*$, and let~$P \in \P$ be the simple path in~$G$ from which it originated. As~$P^*$ is present after the last step, it follows that~$P$ did not meet the precondition for removal in Step~\ref{flower:step:removeshitpaths} and therefore~$P \cap S' = \emptyset$, showing that~$P$ is a simple path in~$G - S'$. Similarly, as~$P$ was not removed by Step~\ref{flower:step:removebudgethitpaths} we know that~$P$ does not fully contain any component~$C$ of~$G - S$ with positive budget. Consider what happens when deleting components~$C$ with budget zero in Step~\ref{flower:step:splitpaths}. Since all vertices of~$G - S$ have degree two in~$G$, there are at most two components~$C$ of~$G - S$ from which~$P$ contains at least one, but not all vertices: these are the components containing the endpoints of~$P$. Hence if~$P$ is transformed into a disconnected subgraph~$P'$ by Step~\ref{flower:step:splitpaths}, then (1)~$P'$ contains at least one vertex of~$S \setminus S'$, and (2) there are at most two connected components in~$P' - S$, and both these components have (in subgraph~$P'$) a neighbor in~$S \setminus S'$. This shows that when all vertices of~$S \setminus S'$ are merged into a single vertex~$z$ by Step~\ref{flower:step:merge}, then the disconnected subgraph~$P'$ is transformed into a path~$P^*$ containing the core vertex~$z$ in its interior. Hence all subgraphs~$P^*$ in~$\P^*$ are simple paths.
\claimqed
\end{proof}

The claim shows that we can use the structures~$G^*$ and~$\P^*$ resulting from the process above to formulate an instance of \HitPathsInFlower. To that end, we use~$G^*$ as the flower graph,~$z$ as the core, and~$\P^*$ as the set of paths to be hit. We number the connected components of~$G^* - \{z\}$, which are the petals of the flower, as~$R_1, \ldots, R_n$. Each such petal corresponds to a connected component of~$G - S$ for which we assigned a budget when guessing~$\C' \subseteq \C$; we define the budget function~$b^*$ for the instance by letting~$b^*(i)$ be~$b(C_i)$ where~$C_i$ is the component of~$G - S$ corresponding to~$R_i$. This results in a valid instance~$(G^*,z,\P^*,b^*)$ of \HitPathsInFlower. For the correctness of the algorithm, the following claim is crucial.

\begin{numberedclaim}
For every guess of~$S' \subseteq S$ and~$\C' \subseteq \C$, the following are equivalent. 
\begin{enumerate}
	\item There is a hitting set~$X$ for the paths~$\P$ in graph~$G$ such that~$X \cap S = S'$ and all components~$C$ of~$G-S$ satisfy~$|X \cap C| = b(C)$.\label{eqv:graphhitset}
	\item The produced instance~$(G^*,z,\P^*,b^*)$ of \HitPathsInFlower has a solution.\label{eqv:flowerhitset}
\end{enumerate}
\end{numberedclaim}
\begin{proof}
(\ref{eqv:graphhitset}$\Rightarrow$\ref{eqv:flowerhitset}) Suppose there is a hitting set~$X$ satisfying the stated conditions. We claim that~$X^* := X \setminus S$ is a solution for~$(G^*,z,\P^*,b^*)$. By the preconditions, set~$X^*$ satisfies the budget constraints for the petals. Consider a path~$P^* \in \P^*$ derived from a path~$P \in \P$. Set~$X$ does not hit~$P$ in a component of~$G - S$ with~$b(C) = 0$, as~$X$ contains no vertices of such components. Set~$X$ does not intersect~$P$ in a vertex of~$S'$, as such paths are not present in~$\P^*$ due to Step~\ref{flower:step:removeshitpaths}. It follows that~$X$ intersects~$P$ in a vertex~$v$ of~$V(G) \setminus S$ that lies in a connected component of~$G - S$ with positive budget. Hence the component forms a petal~$R_i$ in~$G^*$, and the intersection of~$P$ with the petal is contained in~$P^*$. Hence~$X^*$ hits~$P^*$ at~$v$. Since all paths in~$\P^*$ are hit, the budget constraints are met, and~$z \not \in X^*$, it follows that~$X^*$ is a solution to~$(G^*,z,\P^*,b^*)$.

(\ref{eqv:flowerhitset}$\Rightarrow$\ref{eqv:graphhitset}) For the reverse direction, suppose the flower problem has a solution~$X^*$. We claim that~$X := X^* \cup S'$ is a hitting set for the paths~$\P$ in~$G$ with~$|X \cap C| = b(C)$ for all components~$C$ of~$G - S$. The latter condition is easily verified: components~$C$ with~$b(C) = 0$ were discarded in Step~\ref{flower:step:splitpaths}, do not occur in~$G^*$, and therefore~$X^*$ contains no vertices of such components. For components with positive budget, which also exist in~$G^*$, the budget constraints in the definition of \HitPathsInFlower ensure that~$|X \cap C| = |X^* \cap C| = b(C)$. Let us verify that~$X$ indeed hits all paths~$\P$ in~$G$ by considering an arbitrary~$P \in \P$. If~$P$ contains a vertex of~$S'$ then~$X$ trivially hits~$P$. Similarly, if~$P$ fully contains a component~$C$ of~$G - S$ with~$b(C) > 0$, then~$X^*$ contains at least one vertex of~$C$ and therefore of~$P$; hence~$X \supseteq X^*$ also hits~$P$. In the remaining case, the construction of~$G^*,\P^*$ ensures that~$\P^*$ contains a path~$P^*$ such that~$P^* - \{z\}$ is a subgraph of~$P - S$. Since~$X^*$ hits~$P^*$ in a vertex other than~$z$, this vertex is included in~$X$ and therefore~$X$ hits~$P$. It follows that~$X$ is indeed a hitting set for all paths in~$\P$, which concludes the proof of the equivalence.
\claimqed
\end{proof}

Using the claim, the final part of the algorithm becomes clear. For every guess~$S' \subseteq S$ and~$\C' \subseteq \C$ that leads to a solution of size at most~$t$, we construct the corresponding instance of \HitPathsInFlower and solve it using Theorem~\ref{theorem:pathsinflower:poly}. Since the flower instances are not larger than the input instance, this can be done in time~$(|G| + |\P|)^{\Oh(1)}$ for every guess. As there are~$2^{|S|} \cdot 2^{|\C|} \leq 2^{2k} \cdot 2^{3k}$ options for~$S'$ and~$\C'$ to check, the total running time is bounded by~$2^{5k} (|G| + |\P|)^{\Oh(1)}$. If one of the \HitPathsInFlower instances has answer \yes, then we output \yes; otherwise we output \no. In one direction, the correctness of this approach follows from the previous claim together with the facts that flower instances are only produced when the size of the resulting hitting sets is at most~$t$. For the other direction, if~$(G,\P,t,k)$ has a hitting set of size at most~$t$, then by Lemma~\ref{lemma:budget:on:paths} there is a minimum-cardinality hitting set~$X$ (whose size is at most~$t$) whose intersection with every component~$C$ of~$G - S$ is either~$\opt(C)$ or~$\opt(C) + 1$. In the branch where~$S' = X \cap S$ and~$\C'$ consists of the components where we use~$\opt(C)$ vertices, this leads to a \yes-instance of \HitPathsInFlower. This concludes the proof of Theorem~\ref{theorem:pathsingraph:fpt}.
\qed
\end{proof}

We remark that, while the previous theorem shows that \HitPathsInGraph is fixed-parameter tractable parameterized by the cyclomatic number, this problem is unlikely to admit a polynomial kernel. The general \HittingSet problem parameterized by the number of universe elements~$n$ can be reduced to an instance of \HitPathsInGraph with cyclomatic number~$\Oh(n^2)$: if we let~$G$ be a complete~$n$-vertex graph, which has cyclomatic number~$\Oh(n^2)$, then we can model any subset of the universe as a simple path in~$G$. Hence there is a polynomial-parameter transformation from \HittingSet parameterized by the universe size to \HitPathsInGraph parameterized by cyclomatic number. Since \HittingSet parameterized by universe size has no polynomial kernel unless \containment~\cite[Theorem 5.3]{DomLS14}, the same holds for \HitPathsInGraph parameterized by cyclomatic number.

\section{Hardness proofs} \label{section:hardness}

In this section we develop several hardness proofs. It turns out to be convenient to first prove the W[1]-hardness of 3-SAT in multiple valued logic. A similar result concerning the W[1]-hardness of \emph{not-all-equal} 3-SAT was obtained independently by Bringmann et al.~\cite{BringmannHML15}, who studied the problem under the name \textsc{NAE-Integer-$3$-SAT}.

\begin{theorem} \label{thm:threesat:whard}
The problem \SignedThreeSatn is W[1]-hard.
\end{theorem}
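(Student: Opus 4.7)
The plan is to reduce from \textsc{Multicolored Clique}, which is W[1]-hard parameterized by the number~$k$ of color classes. An instance consists of a graph~$G$ with a vertex partition~$V(G) = V_1 \cup \cdots \cup V_k$, with each~$V_i = \{v_{i,1}, \ldots, v_{i,n}\}$, and asks for a clique containing exactly one vertex from each class. I would build a totally ordered regular signed 3-CNF formula~$\Phi$ over truth value set~$[N]$ with~$N := \max(n, \max_{i < j} |E_{ij}|)$, where~$E_{ij}$ denotes the edges between~$V_i$ and~$V_j$, fixed in an arbitrary order so every~$e \in E_{ij}$ has an index in~$[|E_{ij}|]$ and well-defined endpoints~$v_{i, f_i(e)} \in V_i$ and~$v_{j, f_j(e)} \in V_j$.

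The variables of~$\Phi$ are: one \emph{vertex variable}~$x_i$ for every~$i \in [k]$, meant to encode the chosen vertex~$v_{i, x_i}$ of~$V_i$; and one \emph{edge variable}~$y_{ij}$ for every~$1 \leq i < j \leq k$, meant to encode which edge of~$E_{ij}$ is used in the clique. That gives~$k + \binom{k}{2} \in \Oh(k^2)$ variables, so the parameter of the produced instance is bounded by a function of~$k$. To force values into the intended ranges I would add the singleton clauses~$x_i \geq 1$,~$x_i \leq n$,~$y_{ij} \geq 1$, and~$y_{ij} \leq |E_{ij}|$; these are trivially of size at most three.

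The heart of the construction is the set of \emph{edge-projection clauses}, which enforce that the endpoints of the selected edge~$y_{ij}$ agree with the selected vertices~$x_i$ and~$x_j$. For each pair~$i < j$ and each edge~$e \in E_{ij}$ with~$a := f_i(e)$ and~$b := f_j(e)$, the implication~$y_{ij} = e \Rightarrow x_i = a$ rewrites to~$[(y_{ij} \leq e - 1) \vee (y_{ij} \geq e + 1)] \vee [(x_i \geq a) \wedge (x_i \leq a)]$. The main obstacle is that the naive CNF conversion of such a pair-constraint has four literals; the key observation that makes the proof work is that we can distribute this over the conjunction inside the second bracket to obtain the two clauses
\[
(y_{ij} \leq e-1) \vee (y_{ij} \geq e+1) \vee (x_i \geq a), \qquad (y_{ij} \leq e-1) \vee (y_{ij} \geq e+1) \vee (x_i \leq a),
\]
each of size exactly three, and analogously two more clauses enforcing~$y_{ij} = e \Rightarrow x_j = b$. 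In total this creates~$\Oh(k^2 n^2)$ clauses, a polynomial blow-up.

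For correctness I would argue both directions. If~$G$ has a multicolored clique~$\{v_{1, a_1}, \ldots, v_{k, a_k}\}$, setting~$x_i := a_i$ and~$y_{ij}$ to the index of the edge~$(v_{i, a_i}, v_{j, a_j})$ in~$E_{ij}$ satisfies every range and projection clause. Conversely, given a satisfying assignment, the range clauses ensure~$x_i \in [n]$ and~$y_{ij} \in [|E_{ij}|]$, so~$y_{ij}$ picks out a genuine edge~$e$ in~$E_{ij}$; the four projection clauses attached to this particular~$e$ then force~$x_i = f_i(e)$ and~$x_j = f_j(e)$, so~$(v_{i, x_i}, v_{j, x_j})$ is an edge of~$G$ for every pair~$i < j$, yielding a multicolored~$k$-clique. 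Since the reduction is polynomial-time computable and the number of variables is~$\Oh(k^2)$, this is a valid parameterized reduction and shows that \SignedThreeSatn is W[1]-hard.
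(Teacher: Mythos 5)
Your reduction is correct and is essentially the same as the paper's: the paper also reduces from a clique variant (plain \kClique rather than \textsc{Multicolored Clique}) using vertex variables, edge-selection variables, and exactly the same four three-literal ``edge-projection'' clauses obtained by distributing the conjunction $x_i \geq a \wedge x_i \leq a$ over the two edge-deselection literals. The only detail you omit (and the paper handles explicitly) is dropping the literals $y_{ij} \leq e-1$ when $e = 1$ and $y_{ij} \geq e+1$ when $e = N$, since $\leq 0$ and $\geq N+1$ are not valid regular signs; this is cosmetic and does not affect correctness.
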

\begin{proof}
To establish the theorem we give an FPT-reduction from the W[1]-complete \kClique problem~\cite[Chapter 21]{DowneyF13}. Let~$(G,k)$ be an instance of \kClique, asking whether the graph~$G$ has a clique of size~$k$. We use an edge representation strategy to encode this problem into an instance of \nTORSThreeSat whose parameter, the number of variables~$n$ in the formula, is~$\Oh(k^2)$. We may assume that~$|E(G)| \geq \binom{k}{2} \geq k \geq 2$, as the instance is trivial otherwise. We may also assume that~$G$ has no isolated vertices. The formula is constructed as follows.

There are variables~$x_1, \ldots x_k$ corresponding to a choice of~$k$ vertices in the clique. In addition, there are~$\binom{k}{2}$ variables~$x_{i,j}$ for~$1 \leq i < j \leq k$ that correspond to the edges between these vertices. The truth value set for the formula is the range of integers from~$1$ to~$|E(G)|$, so~$N := |E(G)|$. Number the vertices of~$G$ as~$v_1, \ldots, v_{|V(G)|}$, and the edges from~$1$ to~$N$, arbitrarily. For every~$1 \leq i < j \leq k$ and for every possible edge index~$\ell \in [N]$, we add four clauses to the formula. Let~$\{v_p,v_q\}$ be the endpoints of the $\ell$-th edge such that~$p < q$. We add the following clauses:
	$$(x_{i,j} \leq \ell-1 \vee x_{i,j} \geq \ell+1 \vee x_i \leq p) \quad (x_{i,j} \leq \ell-1 \vee x_{i,j} \geq \ell+1 \vee x_i \geq p),$$
	$$(x_{i,j} \leq \ell-1 \vee x_{i,j} \geq \ell+1 \vee x_j \leq q) \quad (x_{i,j} \leq \ell-1 \vee x_{i,j} \geq \ell+1 \vee x_j \geq q).$$
	To obtain a valid formula, we omit the literal~$x_{i,j} \leq \ell - 1$ when~$\ell=1$, as do we omit the literal~$x_{i,j} \geq \ell + 1$ when~$\ell = N$. These clauses are automatically satisfied if~$x_{i,j} \neq \ell$, i.e., if~$x_{i,j}$ does \emph{not} select the~$\ell$-th edge. If~$x_{i,j} = \ell$, however, then the clauses force~$x_i$ to have the value~$p$ and~$x_j$ to have value~$q$.
	
The conjunction of the produced clauses for all valid values of~$i,j$, and~$\ell$ forms the output formula. The construction can be performed in polynomial time and produces an instance of \nTORSThreeSat whose parameter~$n$ is~$\binom{k}{2} + k \in \Oh(k^2)$, which is suitably bounded. To complete the proof it suffices to show that~$G$ has a $k$-clique if and only if the formula is satisfiable.

\begin{claim}
If~$G$ has a $k$-clique, then the formula is satisfiable.
\end{claim}
\begin{proof}
Consider a $k$-clique in~$G$ and let the indices of its vertices be~$u_1, u_2, \ldots, u_k$ in order of increasing value. For~$i \in [k]$ assign variable~$x_i$ value~$u_i$, and for~$1 \leq i < j \leq k$ assign variable~$x_{i,j}$ the value of the index of the edge between~$v_{u_i}$ and~$v_{u_j}$. As observed above, the clauses that are created for values~$i,j,\ell$ such that~$x_{i,j}$ does not select the~$\ell$-th edge, are satisfied. It is easy to verify that when~$x_{i,j} = \ell$, the third literal of the created clauses is satisfied. Hence all clauses are satisfied and the formula is satisfiable.
\claimqed
\end{proof}

\begin{claim}
If the formula is satisfiable, then~$G$ has a $k$-clique.
\end{claim}
\begin{proof}
Consider an assignment to the variables that satisfies all clauses. Consider the values taken by the variables~$x_1, \ldots, x_k$. Suppose that some variable~$x_i$ with~$i<k$ has a value exceeding~$|V(G)|$. Then consider the value~$\ell$ of variable~$x_{i,i+1}$, and the clauses produced for this combination. Since~$x_{i,i+1} \leq \ell - 1$ is false, as is~$x_{i,i+1} \geq \ell+1$, we must have~$x_i \leq p$ where~$p$ is the lowest-indexed endpoint of the $\ell$-th edge; but this contradicts the assumption that~$x_i > |V(G)|$. A similar contradiction is reached when~$x_i = k$ by considering variable~$x_{i-1, i}$ instead. Hence the variables~$x_1, \ldots, x_k$ represent indices of vertices in~$G$. 

Next, assume for a contradiction that there are indices~$1 \leq i < j \leq k$ such that~$x_i = x_j$, and let~$\ell$ be the value of variable~$x_{i,j}$. As observed above, the clauses added for the combination~$i,j,\ell$ are only satisfied if~$x_i$ and~$x_j$ represent the indices of the endpoints of the~$\ell$-th edge. But since~$G$ is a simple graph without self-loops, these indices are distinct and therefore these clauses cannot all be satisfied if~$x_i$ and~$x_j$ coincide. Hence the variables~$x_1, \ldots, x_k$ take~$k$ distinct values in the range of~$1$ to~$|V(G)|$.

We claim that the~$k$ vertices in~$G$ whose indices correspond to the values of~$x_1, \ldots, x_k$ form a clique. To see that all pairs of these vertices are adjacent in~$G$, consider a pair~$1 \leq i < j \leq k$ and the value~$\ell$ taken by variable~$x_{i,j}$. The clauses produced for~$i,j,\ell$ are only satisfied if~$x_i$ is the lower-indexed endpoint of the $\ell$-th edge and~$x_j$ is the higher-index endpoint of that edge. Given the values of~$x_i$ and~$x_j$, the clauses can therefore only be satisfied if~$\ell$ is the index of the edge between vertices with indices~$x_i$ and~$x_j$. Hence the edge connecting this pair must be present in~$G$. As~$i$ and~$j$ were arbitrary, this shows that all vertex pairs are adjacent. Hence the set of vertices with indices~$x_1, \ldots, x_k$ is a $k$-clique in~$G$.
\claimqed
\end{proof}

This concludes the proof of Theorem~\ref{thm:threesat:whard}.
\qed
\end{proof}


Theorem~\ref{thm:threesat:whard} is used as the starting point for the next hardness proof.

\begin{theorem} \label{theorem:hitclaws:whard}
It is W[1]-hard to determine, given a graph~$G$ with cyclomatic number~$k$, a set~$\S$ of subgraphs of~$G$, each isomorphic to a tree with at most three leaves, and an integer~$t$, whether there is a set of~$t$ vertices in~$G$ that intersects all subgraphs in~$\S$.
\end{theorem}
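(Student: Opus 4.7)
The plan is an FPT-reduction from \SignedThreeSatn, which Theorem~\ref{thm:threesat:whard} just established to be W[1]-hard. Given an input formula~$\Phi$ with $n$ variables and truth value set~$[N]$, I would construct a \emph{bouquet of cycles}: a central vertex~$z$ and, for each variable~$x_i$, a path~$P_i$ on new vertices~$u_{i,1}, u_{i,2}, \ldots, u_{i,N}$ together with the two edges~$z u_{i,1}$ and~$z u_{i,N}$. A direct count gives $|E(G)| - |V(G)| + 1 = (nN + n) - (1 + nN) + 1 = n$, so the cyclomatic number of the resulting graph~$G$ equals the source parameter. I then set $t := n$ and populate~$\S$ with (i) the path~$P_i$ for every variable (a tree with two leaves), and (ii) a subgraph~$T_C$ for every clause~$C$ of~$\Phi$, defined next.

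For a clause $C = \ell_1 \vee \cdots \vee \ell_m$ with $m \leq 3$, each literal~$\ell_k$ naturally identifies a subpath of~$P_{i_k}$: the prefix $Q_k := \{u_{i_k, 1}, \ldots, u_{i_k, c_k}\}$ with ``outer'' attachment $a_k := u_{i_k,1}$ when $\ell_k = (x_{i_k} \leq c_k)$, and the suffix $Q_k := \{u_{i_k, c_k}, \ldots, u_{i_k, N}\}$ with $a_k := u_{i_k, N}$ when $\ell_k = (x_{i_k} \geq c_k)$. I let~$T_C$ be the subgraph on the vertex set $\{z\} \cup \bigcup_k Q_k$ whose edges are the internal edges of every~$Q_k$ together with the edges $z a_k$ for $k = 1, \ldots, m$. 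A short case analysis shows that every~$T_C$ is a tree with at most three leaves: when the~$Q_k$'s live on distinct variables,~$T_C$ is a subdivided claw with the inner endpoints of the~$Q_k$ as leaves; and when two literals of the clause share a variable, which is the critical case produced by the reduction in Theorem~\ref{thm:threesat:whard} (one $\leq a$ literal and one $\geq b$ literal with $a < b$), the two branches attach to~$z$ through the distinct endpoints~$u_{i,1}$ and~$u_{i,N}$ but do \emph{not} close a cycle, because the interior vertices $u_{i,a+1}, \ldots, u_{i,b-1}$ are omitted from~$T_C$.

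Correctness then rests on a tight-budget argument. The $n$ pairwise-disjoint paths~$P_i$ all force a hit, so every solution has at least~$n$ vertices; since $t = n$ and~$z$ lies on no~$P_i$, any solution~$X$ consists of exactly one vertex~$u_{i, \alpha_i}$ per variable path and avoids~$z$, inducing an assignment $\alpha \colon x_i \mapsto \alpha_i$. Because $z \notin X$, the subgraph~$T_C$ is hit precisely when~$X$ intersects some~$Q_k$, which happens precisely when the literal~$\ell_k$ is satisfied by~$\alpha$. In the reverse direction, any satisfying assignment of~$\Phi$ yields $X := \{u_{i,\alpha_i} : i \in [n]\}$ as a hitting set of size~$n$.

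The main obstacle is the tree check in the same-variable case, since the two edges~$z u_{i,1}$ and~$z u_{i,N}$ together with~$P_i$ already form a cycle in~$G$; one must argue that the prefix-plus-suffix encoding skips the gap of interior vertices and therefore leaves~$T_C$ acyclic, yielding a simple path through~$z$ rather than a cycle. Since the reduction maps the parameter~$n$ of the source problem to a graph of cyclomatic number~$n$, this constitutes an FPT-reduction and establishes W[1]-hardness of the hitting problem.
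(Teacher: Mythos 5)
Your construction is essentially identical to the paper's: the same flower/bouquet graph with $n$ petals of length $N$, the same literal-to-prefix/suffix encoding, the same clause gadgets through the core, and the same tight-budget argument with $t=n$ forced by the $n$ disjoint petal constraints. The only thing to add is an explicit preprocessing step deleting trivially satisfied clauses (those containing $x_i \leq a$ and $x_i \geq b$ with $b \leq a$), since for such a clause your prefix and suffix overlap and $T_C$ would contain the full petal cycle rather than a tree; the paper removes these clauses up front, and your ``$a<b$'' assumption in the same-variable case silently relies on this.
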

\begin{proof}
We give an FPT-reduction from \nTORSThreeSat. Consider an instance of that problem, consisting of a signed 3-CNF formula over variables~$x_1, \ldots, x_n$ whose truth value set is~$[N]$. We assume that there are no clauses that are trivially satisfied (that contain literals~$x_i \leq c_1$ and~$x_i \geq c_2$ for~$c_2 \leq c_1 + 1$), as they can be efficiently recognized and removed without changing the answer.

We construct a hitting set problem on a flower graph~$G$ that has a core~$z$ and~$n$ petals~$R_1, \ldots, R_n$. Each petal is a path on~$N$ vertices whose endpoints are adjacent to~$z$. It is easy to see that this gives a cyclomatic number of at most~$k = n$ for the graph~$G$, as removing the~$n$ edges from~$z$ to the last vertex of each petal gives an acyclic graph. We seek a hitting set of size at most~$t := n$.

Signed literals of the formula have the form~$x_i \leq c$ or~$x_i \geq c$ for~$c \in [N]$. We associate every literal to a prefix or suffix of a petal: a literal~$x_i \leq c$ corresponds to the prefix~$\{r_{i,1}, \ldots, r_{i,c}\}$ of petal~$R_i$, while a literal~$x_i \geq c$ corresponds to the suffix~$\{r_{i,c}, \ldots, r_{i,N}\}$. For every clause~$C$ of the formula, we consider the pre/suffixes associated to its literals. We add the subgraph~$S_C$ that is induced by their vertices, together with~$z$, to the set~$\S$ of subgraphs to be hit. Observe that, since there are no clauses that are trivially satisfied, each such subgraph~$S_C$ induces a tree in~$G$ with at most three leaves. In addition, for every petal~$R_i$ we add the path~$R_i$ as a subgraph to~$\S$. This concludes the description of the hitting set instance.

\begin{numberedclaim} \label{claim:hitset:iff:sat}
There is a hitting set of size at most~$t$ if and only if the formula is satisfiable.
\end{numberedclaim} 
\begin{proof}
($\Rightarrow$) Suppose there is a hitting set~$X$ of size~$t = n$. Since every petal~$R_i$ is present as a subgraph in~$\S$ that must be hit, and the petals are pairwise disjoint, it follows that~$X$ contains exactly one vertex of each petal. In particular, the core~$z$ is not in~$X$. Consider the assignment that sets the value of variable~$x_i$ to the index of the vertex in~$X \cap V(R_i)$, which is a number in the range~$[N]$. To see that an arbitrary clause~$C$ is satisfied, consider the subgraph~$S_C$ created on account of the clause, which consists of~$z$ together with at most three pre/suffixes of petals, one for each literal of~$C$. As the pre/suffix that is hit by~$X$ corresponds to a literal that is satisfied by the assignment, clause~$C$ is satisfied. As~$C$ was arbitrary, the formula is satisfiable.

($\Leftarrow$) Suppose that the formula is satisfied by a particular assignment to~$x_1, \ldots, \linebreak[1] x_n$. Let~$X$ contain vertex~$r_{i, x_i}$ for all~$i \in [n]$. Then all petals are hit by~$X$, and all subgraphs~$S_C$ added on account of a clause~$C$ are hit at a pre/suffix corresponding to a literal in the clause that is satisfied.
\claimqed
\end{proof}

The claim shows the correctness of the reduction. It is a valid FPT-reduction since it can be executed in polynomial time and the new parameter~$k$ equals the old parameter~$n$. Since \SignedThreeSatn is W[1]-hard by Theorem~\ref{thm:threesat:whard}, this concludes the proof.
\qed
\end{proof}

By slightly modifying the construction, we can also obtain the following result which shows that hitting paths in graphs is para-NP-complete~\cite{FlumG06} parameterized by the feedback vertex number of the graph.

\begin{theorem} \label{theorem:hitpaths:fvs:npc}
It is NP-complete to determine, given a graph~$G$ with a feedback vertex set of size two, a set~$\P$ of simple paths in~$G$, and an integer~$t$, whether there is a set of~$t$ vertices in~$G$ that intersects all paths in~$\P$.
\end{theorem}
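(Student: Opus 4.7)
The plan is to reduce in polynomial time from standard \textsc{3-SAT}, restricted to clauses with three literals on three distinct variables; this restriction is NP-complete. Containment in NP is immediate, since a candidate hitting set~$X$ of size~$t$ can be verified by checking that every path in~$\P$ contains a vertex of~$X$.

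Given a formula~$\varphi$ over variables~$x_1, \ldots, x_n$ with clauses~$C_1, \ldots, C_m$, I would build~$G$ with two hub vertices~$z_1, z_2$ together with, for every variable~$x_i$, a pair of value vertices~$T_i$ (``true'') and~$F_i$ (``false''), joined by the edge~$T_iF_i$ and by all four edges to the hubs. Deleting~$\{z_1, z_2\}$ leaves the matching~$\{T_iF_i : i \in [n]\}$, which is a forest, so~$\{z_1, z_2\}$ is a feedback vertex set of size two. (For~$n \geq 2$, no smaller set suffices, since removing any single vertex still leaves some four-cycle of the form~$z_1 T_i z_2 F_i z_1$ intact.)

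The path collection~$\P$ has two types. For each variable I include the two-vertex path~$T_i F_i$; these~$n$ paths are pairwise vertex-disjoint, so together with the budget~$t := n$ they force any solution to pick exactly one vertex from each pair and nothing else, in particular excluding both hubs. For each clause~$C = \ell_1 \vee \ell_2 \vee \ell_3$ on distinct variables~$x_{i_1}, x_{i_2}, x_{i_3}$, let~$s_j$ be the value vertex whose selection satisfies~$\ell_j$ (namely~$T_{i_j}$ for a positive literal and~$F_{i_j}$ for a negative one), and add the five-vertex path~$s_1 - z_1 - s_2 - z_2 - s_3$ to~$\P$; this is a genuine simple path in~$G$, since every value vertex is adjacent to both hubs and the five vertices are pairwise distinct by distinctness of the~$i_j$.

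Correctness then follows by a short argument. Any hitting set~$X$ of size~$n$ must meet each disjoint~$T_iF_i$ and therefore consists of exactly one vertex per pair with no hub; interpreting this as a truth assignment~$A$ (with~$x_i$ true iff~$T_i \in X$), each clause-path~$s_1 z_1 s_2 z_2 s_3$ is hit if and only if some~$s_j$ lies in~$X$, which happens if and only if~$A$ satisfies some~$\ell_j$. The converse direction maps any satisfying assignment to a valid~$X$ of size~$n$ in the obvious way. I do not anticipate a serious obstacle: the construction is essentially a ``path-friendly'' adaptation of the flower reduction behind Theorem~\ref{theorem:hitclaws:whard}, where splitting the single core into two hubs lets one replace each three-leaf subtree by a single five-vertex simple path at the cost of increasing the feedback vertex number from one to two.
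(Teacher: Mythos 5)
Your proposal is correct and uses essentially the same construction as the paper: two universal hub vertices stitch the three literal gadgets of each clause into a single simple path, while disjoint variable paths plus the budget $t=n$ force exactly one choice per variable and exclude the hubs. The only difference is that you reduce from classical \textsc{3-SAT}, which is just the binary ($N=2$) instantiation of the paper's reduction from \SignedThreeSatn, where each literal's prefix/suffix of an $N$-vertex variable path collapses to a single value vertex.
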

\begin{proof}
The proof is similar to that of Theorem~\ref{theorem:hitclaws:whard}, so we only mention the key points. An instance of \SignedThreeSatn on variables~$x_1, \ldots, x_n$ with truth value set~$[N]$ is reduced to an instance of the hitting set problem as follows. For every variable~$x_i$ we create a new path~$R_i$ on~$N$ vertices in the graph. Finally we add two universal vertices~$z, z'$ to the graph, adjacent to all vertices on all created paths. The resulting graph~$G$ has a feedback vertex set of size two, being~$\{z,z'\}$. For every~$i \in [n]$ we add~$R_i$ to~$\S$ to ensure that a vertex of~$R_i$ is selected in every hitting set. For every clause of the formula, we consider the (at most) three pre/suffixes of the petals~$R_i$ corresponding to its literals, as in Theorem~\ref{theorem:hitclaws:whard}. Since both~$z$ and~$z'$ are universal vertices, there is a simple path in~$G$ consisting of the first pre/suffix, vertex~$z$, the second pre/suffix, the vertex~$z'$, and ending with the last pre/suffix. For every clause we add such a path to~$\P$, which ensures that the clause must be satisfied when all paths are hit. Finally, we set the budget to~$t := n$ to ensure that valid solutions select one value for each variable. Following the argumentation of Theorem~\ref{theorem:hitclaws:whard} it is easy to see that the reduction is correct. Since the \SignedThreeSatn problem is NP-complete, the theorem follows.
\qed
\end{proof}

We close this section on hardness by a discussion of subexponential-time algorithms. The construction in Theorem~\ref{theorem:hitpaths:fvs:npc} can be used to reduce an $n$-variable instance of the classical 3-SAT problem (with binary variables) to the problem of hitting simple paths in a graph of cyclomatic number~$\Oh(n)$. This implies that, assuming the exponential-time hypothesis~\cite{ImpagliazzoPZ01}, the dependence on~$k$ in Theorem~\ref{theorem:pathsingraph:fpt} cannot be improved to~$2^{o(k)}$.

\section{Conclusion}

We have analyzed the problem of hitting subgraphs of a restricted form within a larger host graph, parameterized by structural measures of the host graph. There are several research directions related to this work that remain unexplored. For example, we have not touched upon the issue of computing, given a generic hitting set instance consisting of a set system~$\F$ over a universe~$U$, how complex graphs on vertex set~$U$ must be in which every set in~$\F$ induces a connected subgraph. What is the complexity of finding, given~$\F$ and~$U$, a graph of minimum cyclomatic number that embeds~$\F$ in this way? Alternatively, what is the complexity of finding the minimum cyclomatic number of a graph~$G$ such that for every set~$S \in \F$, there is a simple path in~$G$ on vertex set~$S$? Efficient algorithms for this task 
could be used to transform generic hitting set instances into inputs of \HitPathsInGraph, on which Theorem~\ref{theorem:pathsingraph:fpt} can be applied.

One can also consider aggregate parameterizations of the hitting set problem using the measure of structure introduced here. We have shown that \HitPathsInGraph is FPT parameterized by the cyclomatic number. It is well known that the general \HittingSet problem is FPT parameterized by the number of sets, as it can be solved by dynamic programming. Suppose we have a \HittingSet instance where there are~$k_1$ arbitrary sets, and there is a graph~$G$ of cyclomatic number~$k_2$ such that the remaining sets correspond to paths in~$G$. Is \HittingSet parameterized by~$k_1+k_2$ FPT, when this structure is given?

The complexity of the problem changes significantly when weights are introduced for the elements in the universe and the task is to find a minimum-weight hitting set. A simple reduction from \VertexCover shows that finding a minimum-weight set that hits a prescribed set of three-vertex paths in a star graph is already NP-complete. This suggests some topics for further investigation; we list some examples.

\begin{enumerate}
	\item Is the problem of finding a minimum-weight vertex set that hits a prescribed set of \emph{directed paths} in a \emph{directed tree} polynomial-time solvable?
	\item What is the parameterized complexity of the problem of hitting weighted paths in a tree plus~$k$ edges, when the largest weight value is bounded by a constant?
\end{enumerate}

\textbf{Acknowledgments}. We are grateful to Mark de Berg and Kevin Buchin for interesting discussions that triggered this research.

\bibliographystyle{abbrvurl}
\bibliography{../Paper}

\clearpage

\appendix

\section{Reducing signed 2-SAT to classic 2-SAT} \label{app:twosat}
Given a totally ordered regular signed 2-SAT formula, replace each literal of the form~$x_i \geq j$ by~$\neg (x_i \leq j-1)$, noting that if~$j = 1$ the clause is always satisfied and can be removed instead. The resulting clauses consist of literals~$x_i \geq j$ and~$\neg (x_i \geq j)$ for~$j \in [N]$. To construct a classical 2-SAT formula, we interpret every term of the form~$x_i \geq j$ as a new variable, such that we have a variable~$x_i \geq 1$, another variable~$x_i \geq 2$, and so on. The resulting 2-SAT formula over this new set of variables consists of the clauses resulting from our conversion process above, together with clauses~$(x_i \geq j+1 \Rightarrow x_i \geq j)$ for all~$i \in [n]$ and~$j \in [N-1]$. Observe that such clauses may also be represented as~$(\neg (x_i \geq j+1) \vee (x_i \geq j))$, which shows they are valid 2-clauses. Finally, we add singleton clauses~$(x_i \geq 1)$ for all~$i \in [n]$. We invite the reader to verify that the resulting classical formula on variables~$(x_1 \geq 1), \ldots, (x_1 \geq N), \ldots, (x_n \geq 1), \ldots, (x_n \geq N)$ is classically satisfiable if and only if the signed formula is satisfiable over truth value set~$[N]$.

\end{document}